\documentclass[11pt]{article}
\usepackage[T1]{fontenc}
\usepackage{lmodern}
\usepackage[margin=1in]{geometry}
\usepackage{amsmath,amssymb,amsthm,mathtools}
\usepackage{booktabs,enumitem,microtype}
\usepackage[hidelinks]{hyperref}
\numberwithin{equation}{section}
\newtheorem{theorem}{Theorem}[section]
\newtheorem{proposition}[theorem]{Proposition}
\newtheorem{corollary}[theorem]{Corollary}
\newtheorem{lemma}[theorem]{Lemma}
\theoremstyle{definition}
\newtheorem{definition}[theorem]{Definition}
\theoremstyle{remark}

\newcommand{\C}{\mathbb C}

\newcommand{\Id}{I}
\newcommand{\Tr}{\operatorname{Tr}}
\newcommand{\spec}{\operatorname{spec}}

\newcommand{\Qq}{\mathcal Q_{\mathrm q}}
\newcommand{\Qf}{\mathcal Q_{\mathrm{frPVM}}}
\newcommand{\ket}[1]{\lvert #1\rangle}
\newcommand{\bra}[1]{\langle #1\rvert}
\newcommand{\norm}[1]{\left\lVert #1\right\rVert}

\newcommand{\Szero}{\mathsf S_0}
\newcommand{\Strine}{\mathsf S_1}
\newcommand{\Sref}{\mathsf S_H}
\setlist[enumerate]{label=\textup{(\roman*)},leftmargin=*,itemsep=3pt}
\allowdisplaybreaks[1]
\title{A Separation between Full-Rank PVM\protect\\
and Assumption-free Self-Testing}
\author{
  Ranyiliu Chen\\
  {\small Quantum Science Center of Guangdong--Hong Kong--Macao Greater Bay Area}\\
  {\small\href{mailto:chenranyiliu@quantumsc.cn}
    {\texttt{chenranyiliu@quantumsc.cn}}}
}
\date{}

\begin{document}
\maketitle

\begin{abstract}
  We construct a nonlocal game that self-tests a maximally entangled qubit strategy among pure full-Schmidt-rank projective strategies, but admits an inequivalent optimum using one nonprojective measurement. This resolves a conjecture of Baptista et al. on imposing full rank and projectivity simultaneously. The construction combines CHSH with an auxiliary game $G$ that forces deterministic answers under these assumptions and admits a trine POVM optimum without them. We classify the optimal correlations of $G$ as a line segment parametrized by the tracial states of $\C\oplus M_2(\C)$. Projectivity on the state support removes the matrix summand, whereas projective dilations with nonzero nonabort probability have a nontracial local state whose zero left ideal is not two-sided. Finally, a family in local dimension six shows that the full-rank PVM self-test is not robust.
\end{abstract}

\section*{\normalsize{Statement on the Use of Artificial Intelligence}}

The author(s) formulated the research question. Several frontier LLMs was used under human direction to generate exploratory proofs, prepare and revise the manuscript. The author(s) then refined the logical flow of the proof, guided the revision, and verified the statements. The author(s) take full responsibility for the content and correctness of this article.

\section{Introduction}

Self-testing provides a way to certify entangled states and measurement effects from the value of a nonlocal game, up to local isometries and an auxiliary state. Its conclusion concerns every optimal strategy in the class under consideration. We study how that conclusion changes when the shared state is required to be pure and of full Schmidt rank, and the measurements are required to be projective.

Baptista et al.~\cite{Baptista} give general results for removing such assumptions from self-testing theorems. For a full-rank projective reference strategy, one may separately start from pure projective strategies or from pure full-rank strategies. Requiring both full rank and projectivity at once presents a different problem. Naimark dilation replaces positive operator-valued measurements by projective measurements while preserving correlations, but can introduce local dimensions outside the support of the state. Compression to the Schmidt supports restores full rank, but a compression of a projection need not be a projection. The two standard reductions consequently do not supply a realization that satisfies both assumptions. Baptista et al. conjecture that this obstruction can persist in an exact self-testing problem even when the reference strategy itself satisfies both assumptions.

We prove their conjecture with an explicit game $H$ having six questions for Alice, five for Bob, and a maximally entangled reference strategy. The verification function takes only the values zero and one.

\begin{samepage}
  \begin{theorem}[Main result]\label{thm:main}
    There is a $0/1$ game $H$ and a projective reference strategy $\Sref$ on $\C^2\otimes\C^2$ with shared state $\ket{\Phi^+}=(\ket{00}+\ket{11})/\sqrt2$ such that:
    \begin{enumerate}
      \item The classical and quantum values satisfy
            \[
              \omega_{\mathrm c}(H)=\frac{71}{88} <\omega_{\mathrm q}(H)=\frac{60+11\sqrt2}{88}.
            \]
      \item Every optimal pure full-Schmidt-rank PVM strategy admits a local dilation to $\Sref$.
      \item There is an optimal pure full-Schmidt-rank strategy on $\C^2\otimes\C^2$ with one nonprojective measurement that admits no local dilation to $\Sref$.
    \end{enumerate}
  \end{theorem}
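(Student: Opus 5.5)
We will construct $H$ as a weighted combination of CHSH (Alice questions $x\in\{0,1\}$, Bob questions $y\in\{0,1\}$) with an auxiliary game $G$ on the remaining questions (four for Alice, three for Bob), joined by a few coupling question pairs. The weights are chosen so that the total probability mass is $88$ units and the values come out as stated. In $G$, Bob's questions will be answered by a trine-type POVM on $\C^2$. We design $G$ to have three key properties. First, $G$ is won with certainty by a strategy in which Bob measures the trine POVM $\{\tfrac23\ket{\psi_k}\bra{\psi_k}\}_{k=0}^2$ on half of $\ket{\Phi^+}$. Second, $G$ also has an optimal deterministic (abort-type) strategy. Third, every optimal strategy of $G$ yields a correlation on the segment between these two endpoints. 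The CHSH part carries value $\tfrac{2+\sqrt2}{4}$ and weight $\tfrac{11}{88}$-multiples, which accounts for the $11\sqrt2$ term. The classical value $\tfrac{71}{88}$ will then follow from exhaustively checking the finitely many deterministic strategies. Here we use that deterministic strategies achieve the full value on $G$ but at most $\tfrac34$ on CHSH, and that the coupling terms prevent any classical gain. Since $\omega_{\mathrm q}$ is attained by the product of the two optimal parts, we have $\omega_{\mathrm q}(H)=\omega_{\mathrm q}(\mathrm{CHSH}\text{ part})+\omega(G\text{ part})$, and the strict inequality in (i) is immediate.

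For (ii), the plan is to classify optimal correlations of $G$ operator-algebraically. Optimality of $G$ forces algebraic relations on Bob's effects: orthogonality relations from the zero-weight answer pairs and sum-to-identity relations. These realize a representation of a finite-dimensional algebra isomorphic to $\C\oplus M_2(\C)$, with the state acting as a tracial state on it. The tracial states form a segment parametrized by the weight on the $\C$ summand, which gives the line-segment classification of the optimal correlations. Under a pure full-Schmidt-rank state, the reduced state is faithful. Projectivity of the effects on the support then forbids the $M_2$ summand, because a trine has no projective realization with rank-one sum relations. Hence the $G$-answers are deterministic. Once $G$ is deterministic, the coupling constraints in $H$ reduce the CHSH-part optimality to standard CHSH optimality on the same state. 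The known CHSH self-test (for instance the Summers–Werner / McKague–Yang–Scarani argument) then gives a local dilation to $\Sref$, with the extra deterministic measurements extended trivially.

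For (iii), we exhibit the explicit strategy: $\ket{\Phi^+}$, Bob's $G$-question measured by the trine POVM, and the remaining measurements as in the deterministic CHSH optimum. This strategy lies at the other end of the segment, so it is optimal. It admits no local dilation to $\Sref$, because its $G$ marginal correlations differ from those of $\Sref$, while local dilations preserve correlations.

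The main obstacle will be the second step: proving that every optimal correlation of $G$ comes from a tracial state on $\C\oplus M_2(\C)$, and that projectivity on the full-rank support kills the matrix block. For the tracial property, we plan to use the optimality relations to show that the reduced state commutes with Bob's algebra, using the standard "$A\ket\psi=B\ket\psi$" transfer together with the full Schmidt rank. We will also need to check that projective dilations with nonabort weight must leave the support, via the non-two-sided zero left ideal, to confirm that the hypotheses of (ii) are genuinely necessary.
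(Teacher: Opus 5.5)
There is a genuine gap: you never construct the auxiliary game $G$, and that construction is the whole content of the theorem. The statement is existential with specific values. A proof therefore has to give a question distribution and verification function, and prove an upper bound on $\omega_{\mathrm q}(G)$ whose equality conditions become operator relations under full Schmidt rank. Saying you will ``design $G$ to have three key properties'' assumes exactly what is hard: a game where both an abort behavior and a trine behavior are optimal, yet every full-rank PVM optimum must abort.

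The paper gets this from a Bell bound in terms of the nonabort effects, $L=\sum_{ij}C_{ij}\langle A_i\otimes B_j\rangle-\sum_i\langle E_i\rangle-\sum_j\langle F_j\rangle\leq0$. Both behaviors saturate it (the trine gives $6-3-3=0$). The weights $W$ and the penalized single-abort and forbidden events are then chosen so that the exact identity \eqref{eq:score-identity} and the sum-of-squares form \eqref{eq:sos} hold. The vanishing terms give $E_i=P$, $F_j=Q$, $\sum_iA_i=0$, $(A_i\otimes\Id)\ket{\psi}=-(\Id\otimes B_i)\ket{\psi}$, $\sum_iM_i=P$ and $M_iA_i=M_i$. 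Lemma~\ref{lem:obstruction} then forces $P=0$: reflections summing to zero admit no projective partition $N_i$ with $R_iN_i=N_i$. This is the right obstruction, not a statement about rank-one sums. Without such a certificate, your claim that optimality yields a tracial representation of $\C\oplus M_2(\C)$ has nothing to rest on. That classification is also more than (ii) needs, since the direct equality-condition argument already makes the $G$ effects deterministic.

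The specifics you do state are also inconsistent. With CHSH weight $w$, matching the $\sqrt2$ term $11\sqrt2/88$ forces $w=1/2$. The remaining half must then have value $19/22$, so $G$ cannot be won with certainty as you claim, unless your coupling pairs lose exactly $6/88$ at every optimum. But that would break the additivity $\omega_{\mathrm q}(H)=\tfrac12\omega_{\mathrm q}(G)+\tfrac12\omega_{\mathrm q}(\mathrm{CHSH})$ you use. That additivity, and the fact that every optimum of $H$ optimizes each component separately, both rely on disjoint question sets with no cross pairs.

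A value-one design would also fail for the trine correlation $p_1$. Any perfect game accepting its support must accept every sign pair on the off-diagonal base questions, so it cannot enforce $\sum_iA_i=0$. Then a deterministic nonabort strategy also wins with certainty, destroying the self-test: Alice answers $+1$ on the base questions and $1$ on question $4$, and Bob answers $-1$ everywhere. In the paper, the trine strategy instead loses each off-diagonal base pair with probability $1/4$, and this is compensated exactly by the weight-$3$ pair $(1,1)$, which the abort strategy loses.

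Two smaller points. You must handle the extra answers in the fixed alphabets on the CHSH questions; the paper shows the invalid effects annihilate the state and so vanish under full rank. In (iii), your reasoning that different correlations rule out exact local dilation is fine. However, the trine optimum needs the reflections $T_i$ and $-T_j$ on all base $G$ questions, not just the trine POVM on a single question.
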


\end{samepage}

The construction has two steps. First, we build an auxiliary game $G$ with four Alice questions, three Bob questions, and classical and quantum value $19/22$. Every pure full-Schmidt-rank projective optimum returns a distinguished answer $0$ on every question. A second optimum uses $\ket{\Phi^+}$, three trine observables, and an additional three-outcome POVM, and never returns $0$. This gives two different optimal measurement behaviors, only one of which is available under the joint assumptions.

We then let the referee choose between $G$ and CHSH with equal probability, using disjoint question labels. The reference $\Sref$ uses $\ket{\Phi^+}$, standard CHSH measurements on the CHSH questions, and deterministic $0$ measurements on the $G$ questions. An optimal restricted strategy must maximize both components: CHSH supplies the state and its measurement relations, while $G$ fixes all the remaining measurement effects. Using the trine optimum on the $G$ questions gives a second optimum on the same Bell state, with the same CHSH measurements. Its different $G$ effects prevent a local dilation to $\Sref$. Thus every unrestricted optimum still certifies a Bell state, and the separation concerns certification of the complete strategy.

The main step in constructing $G$ is a bound for a three-setting Bell expression in terms of the total probabilities of giving nonzero answers. Both the deterministic behavior and the trine behavior saturate this bound. We choose question weights and penalized events so that an exact identity expresses the gap from the optimal score as a sum of nonnegative terms. Under full rank, its equality conditions become operator relations; imposing projectivity makes these relations incompatible with any nonzero nonabort subspace. The deterministic measurements that remain admit a one-dimensional reference strategy $\Szero$. The physical shared state is retained in the auxiliary system, as we make explicit in the proof.

The auxiliary game also permits a complete algebraic analysis. After compression to the local state supports, Alice's optimality relations have universal unital $C^*$-algebra
\[
  \mathcal B\cong\C\oplus M_2(\C).
\]
The scalar summand carries the deterministic behavior, and the matrix summand carries the trine behavior. The compatible local states are exactly the traces
\[
  \tau_\lambda(c\oplus a) =(1-\lambda)c+\frac{\lambda}{2}\Tr(a),\qquad 0\leq\lambda\leq1.
\]
They correspond affinely and bijectively to the finite-dimensional optimal correlations of $G$, with $\lambda$ equal to the nonabort probability. If $m_i$ are the three extra measurement effects, then
\[
  \mathcal B/\langle m_i^2-m_i:i=1,2,3\rangle\cong\C.
\]
Projectivity on the state support therefore leaves just the deterministic optimum. This is the algebraic mechanism used in the construction of $H$.

Projective realizations before support compression exhibit a different local state structure. For every optimal projective realization of $G$ with nonzero nonabort probability, Alice's zero left ideal
\[
  N_\varphi=\{a:\varphi(a^*a)=0\}
\]
is not two-sided; her local state is neither faithful nor tracial on the measurement algebra. A concrete realization generates $M_3(\C)$ and has a reduced state of rank two. Its compression gives the trine effects through a completely positive map with an explicit multiplicative defect. These calculations describe the distinction between state equations and operator relations in the example, complementing operator-algebraic formulations of self-testing such as~\cite{Paddock}.

Finally, the exact separation has a quantitative consequence. Perturbing the rank-two state of the projective dilation gives full-rank PVM strategies for $G$ whose scores tend to $19/22$ while their measurement dilation error stays at least one. Tensoring with a standard CHSH strategy gives the corresponding family for $H$ in local dimension six. Consequently, the self-test in Theorem~\ref{thm:main} is not robust.

The measurement ingredients have established precedents. The three-setting Bell expression and the three forbidden events occur in Pan~\cite[Eqs.~(10), (31)]{Pan}. The trine POVM and its three-dimensional dilation also occur in Baptista et al.~\cite[Section~6, Eqs.~(37)--(39)]{Baptista}. We use these ingredients to construct $G$ and then combine it with CHSH to obtain $H$. Section~\ref{sec:definitions} fixes the self-testing convention. Section~\ref{sec:game} constructs and analyzes $G$; Section~\ref{sec:chsh} completes the construction of $H$ and proves Theorem~\ref{thm:main}. Section~\ref{sec:algebra} gives the algebraic classification for $G$ and establishes the failure of robustness.

\section{Strategies and local dilation}\label{sec:definitions}

All Hilbert spaces used for tensor-product strategies are finite dimensional. We write $M_d(\C)$ for the algebra of complex $d\times d$ matrices. A positive operator-valued measurement (POVM) is a finite family of positive semidefinite operators summing to the identity. It is a projection-valued measurement (PVM) if every effect is a projection. Zero effects are allowed. A reflection is a self-adjoint operator whose square is the identity.

A two-player nonlocal game consists of finite question sets $\mathcal X$, $\mathcal Y$, finite answer sets $\mathcal O_A$, $\mathcal O_B$, a question distribution $\pi$, and a function $V(a,b\mid x,y)\in\{0,1\}$. A strategy $\mathsf S=(\rho,\{P_{x,a}\},\{Q_{y,b}\})$ has correlation
\[
  p(a,b\mid x,y)=\Tr\bigl(\rho(P_{x,a}\otimes Q_{y,b})\bigr)
\]
and winning probability
\[
  \omega_G(p)=\sum_{x,y,a,b}\pi(x,y)V(a,b\mid x,y)p(a,b\mid x,y).
\]
The quantum value $\omega_{\mathrm q}(G)$ is the supremum over all such strategies. We call a strategy optimal if it attains this value. A classical correlation is a convex combination of deterministic local response functions, and the classical value $\omega_{\mathrm c}(G)$ is the maximum score over these correlations. A correlation is nonlocal if it is not classical.

A pure state $\ket\psi\in H_A\otimes H_B$ has full Schmidt rank if both reduced density matrices are invertible. In particular, $\dim H_A=\dim H_B$. The elementary implication
\begin{equation}\label{eq:faithful}
  (T\otimes\Id)\ket\psi=0\quad\Longrightarrow\quad T=0
\end{equation}
then holds, and likewise on Bob's side. It follows directly by expanding $\ket\psi$ in a Schmidt basis with positive Schmidt coefficients.

We use the local-dilation convention of \cite[Definitions~2.5 and 2.7]{Baptista}. For clarity, we state also its approximate form, which is needed in Section~\ref{sec:robust}.

\begin{definition}\label{def:dilation}
  Let $\widetilde{\mathsf S}=(\ket{\widetilde\psi}, \{\widetilde P_{x,a}\},\{\widetilde Q_{y,b}\})$ be a pure reference strategy. It is a local $\varepsilon$-dilation of $\mathsf S$ if, for every purification $\ket\psi\in H_A\otimes H_B\otimes H_E$ of $\rho$, there are isometries
  \[
    U_A:H_A\longrightarrow\widetilde H_A\otimes K_A, \qquad U_B:H_B\longrightarrow\widetilde H_B\otimes K_B
  \]
  and a unit vector $\ket\eta\in K_A\otimes K_B\otimes H_E$ such that, writing $U=U_A\otimes U_B\otimes\Id_E$ and suppressing identity factors,
  \begin{align}
    \norm{U\ket\psi-\ket{\widetilde\psi}\otimes\ket\eta}&\leq\varepsilon,
    \label{eq:dilation-state}\\
    \norm{U(P_{x,a}\otimes\Id)\ket\psi- (\widetilde P_{x,a}\otimes\Id)\ket{\widetilde\psi}\otimes\ket\eta}
    &\leq\varepsilon,\label{eq:dilation-alice}\\
    \norm{U(\Id\otimes Q_{y,b})\ket\psi- (\Id\otimes\widetilde Q_{y,b})\ket{\widetilde\psi}\otimes\ket\eta}
    &\leq\varepsilon\label{eq:dilation-bob}
  \end{align}
  for all questions and answers. The tensor factors on the right are put in the corresponding order. A local dilation means $\varepsilon=0$. A game self-tests $\widetilde{\mathsf S}$ within a class of strategies if every optimal strategy in that class admits this local dilation. Unrestricted self-testing uses all strategies defined above.
\end{definition}

The measurement expressions in this definition use the effects themselves, not their square roots. Exact local dilation preserves correlations: take the inner product of the two single-party vectors in \eqref{eq:dilation-alice} and \eqref{eq:dilation-bob}.

\section{An auxiliary game}\label{sec:game}

We first construct a game $G$ whose optimal measurement behavior changes when full rank and projectivity are imposed simultaneously. It will supply the measurement component of $H$.

\subsection{Rules and optimal strategies}

Alice's questions are $1,2,3,4$ and Bob's are $1,2,3$. Their fixed answer sets are
\[
  \mathcal O_A=\{0,1,2,3\},\qquad \mathcal O_B=\{0,+1,-1\}.
\]
The answer $0$ means abort. On Alice's first three questions, answers $1,2$ are interpreted as signs $+1,-1$, respectively, and answer $3$ always loses. On question $4$, her answers $1,2,3$ are three distinct nonabort outcomes. Define
\begin{equation}\label{eq:weights}
  W=\begin{pmatrix}3&2&2\\2&2&2\\2&2&2\\1&1&1\end{pmatrix},
  \qquad \pi(x,y)=\frac{W_{xy}}{22}, \qquad C=J_3-2\Id_3,
\end{equation}
where $J_3$ is the all-ones matrix. The winning conditions are as follows.
\begin{enumerate}
  \item On $(1,1)$, both players must give nonabort signs and the signs must be opposite.
  \item On $(i,j)\ne(1,1)$ with $i\leq3$, both may abort; otherwise both must give signs $a,b\in\{+1,-1\}$ with $ab=C_{ij}$.
  \item On $(4,j)$, both may abort; otherwise Alice must answer $k\in\{1,2,3\}$ and Bob a sign $b$, with $(k,b)\ne(j,+1)$.
\end{enumerate}
Every outcome not specified as winning loses. In particular, exactly one player aborting always loses. These rules define a $0/1$ game, denoted by $G$.

For question sets with a distinguished answer $0$, define the \emph{deterministic abort reference} $\Szero$ by
\begin{equation}\label{eq:abort-reference}
  \widetilde H_A=\widetilde H_B=\C,\qquad \ket{\widetilde\psi}=\ket0\otimes\ket0,\qquad \widetilde P_{x,a}=\delta_{a,0}\Id_{\C},\qquad \widetilde Q_{y,b}=\delta_{b,0}\Id_{\C}.
\end{equation}
Its state has full Schmidt rank and its measurements are PVMs.

The reference $\Szero$ loses only on $(1,1)$, so its winning probability is $19/22$.

\begin{proposition}\label{prop:auxiliary}
  The game $G$ has the following properties.
  \begin{enumerate}
    \item $\omega_{\mathrm c}(G)=\omega_{\mathrm q}(G)=19/22$.
    \item Every optimal pure full-Schmidt-rank PVM strategy has abort effect $\Id$ on every question and all other effects zero. Consequently, $G$ self-tests $\Szero$ within this class.
    \item A pure full-Schmidt-rank strategy on $\C^2\otimes\C^2$, using a single nonprojective measurement, also attains $19/22$ and never aborts. Thus $G$ does not unrestrictedly self-test $\Szero$.
  \end{enumerate}
\end{proposition}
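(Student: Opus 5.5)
The approach is to find an exact sum-of-nonnegative-terms identity for $19/22-\omega_G(p)$. Saturating it then yields operator relations under full Schmidt rank, and projectivity makes those relations inconsistent on any nonzero nonabort subspace.

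\textbf{Setup and the identity for part (i).} For $x\le 3$ write $N_x=\Id-P_{x,0}$ for Alice's nonabort effect and $A_x=P_{x,1}-P_{x,2}$ for her sign observable. For Bob write $M_y=\Id-Q_{y,0}$ and $B_y=Q_{y,+1}-Q_{y,-1}$. Let $m_k=P_{4,k}$ for $k=1,2,3$.

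The score is a linear functional in the quantities
\begin{itemize}
  \item the abort-matching terms $\langle P_{x,0}\otimes Q_{y,0}\rangle$,
  \item the Bell terms $\langle A_x\otimes B_y\rangle$, weighted by $C_{xy}$ (this is Pan's three-setting expression),
  \item the question-$4$ terms $\langle m_k\otimes Q_{j,b}\rangle$.
\end{itemize}
The target is an identity of the form
\[
  \tfrac{19}{22}-\omega_G = \tfrac1{22}\Bigl(\textstyle\sum_{x,y} c_{xy}\,\langle\text{abort mismatch}_{xy}\rangle+\sum_x \langle P_{x,3}\otimes\Id\rangle+\sum_j\norm{\,\cdots\ket\psi}^2+\sum_{j}\langle m_j\otimes Q_{j,+1}\rangle\Bigr),
\]
where every term is nonnegative. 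Each squared norm is a combination such as $\bigl\|(A_x\otimes\Id+\Id\otimes \sum_y \alpha_{xy}B_y)\ket\psi\bigr\|^2$, and it is bounded against the nonabort weights $\langle N_x\otimes \Id\rangle$ and $\langle \Id\otimes M_y\rangle$.

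The weights $W$ in \eqref{eq:weights} were chosen so that the coefficients match. The concrete check is the following count.
\begin{itemize}
  \item Both the deterministic abort and the trine behaviour score $19$ out of $22$.
  \item Under the trine behaviour, $(1,1)$ and the diagonal cells contribute $3+4$.
  \item The six off-diagonal cells each win with probability $3/4$, contributing $12\cdot\tfrac34=9$.
  \item Question $4$ wins with certainty, contributing $3$.
\end{itemize}
This gives $\omega_{\mathrm q}(G)\le 19/22$. Since $\Szero$ is deterministic, it gives $\omega_{\mathrm c}(G)\ge 19/22$, and $\omega_{\mathrm c}\le\omega_{\mathrm q}$ yields (i).

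\textbf{Part (ii).} Assume the strategy is optimal, pure, full-Schmidt-rank, and projective. Each term in the identity then vanishes, and \eqref{eq:faithful} converts each state equation into an operator equation on Alice's space. The steps are:
\begin{enumerate}
  \item Abort consistency gives $(P_{x,0}\otimes\Id)\ket\psi=(\Id\otimes Q_{y,0})\ket\psi$ for all $x,y$. Hence all of Alice's abort projections coincide with a single projection $p$, and $N:=\Id-p$ is a common nonabort projection.
  \item On $N H_A$ the $A_x$ are reflections.
  \item The vanishing squares force trine-type relations such as $A_1+A_2+A_3=0$ and $\{A_x,A_y\}=-N$ for $x\ne y$, after transporting Bob's $B_y$ to Alice via the state.
  \item The question-$4$ terms force $m_k$ to vanish against the transported $+1$ spectral projection $q_k$ of Bob's $B_k$, and tightness pins down $m_k=\tfrac23 q_k$.
  \item The $q_k$ are half-rank projections, so $\tfrac23 q_k$ is not a projection unless $q_k=0$. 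This forces $N=0$.
\end{enumerate}
Then every abort effect equals $\Id$. For the dilation to $\Szero$, take $U_A\colon H_A\to\C\otimes H_A$ to be the canonical isomorphism, do the same for Bob, and set $\ket\eta=\ket\psi$.

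\textbf{Part (iii).} This is a direct verification on $\ket{\Phi^+}$ with $\omega=2\pi/3$:
\begin{itemize}
  \item Bob uses the trine observables $B_j=\cos(j\omega)\,Z+\sin(j\omega)\,X$.
  \item Alice uses $A_i=-B_i^{T}$.
  \item On question $4$, Alice uses the trine POVM $m_k=\tfrac23\,\ket{b_k^-}\!\bra{b_k^-}$, suitably transposed.
\end{itemize}
One checks the following.
\begin{itemize}
  \item Every diagonal cell wins with probability $1$, including $(1,1)$, since no one aborts.
  \item Each off-diagonal cell wins with probability $\tfrac12\bigl(1+\tfrac12\bigr)=\tfrac34$.
  \item Question $4$ never produces the forbidden pair $(j,+1)$, because $m_j$ is orthogonal, through the state, to Bob's $+1$ eigenvector.
\end{itemize}
The total is $19/22$, and no abort occurs. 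Any local dilation to $\Szero$ would preserve correlations and force $p(0,0\mid x,y)=1$, a contradiction.

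\textbf{Main obstacle.} The hard step is finding the exact sum-of-nonnegative-terms identity, with coefficients that make both optima tight, together with extracting the precise relation $m_k=\tfrac23 q_k$ from it. I would first solve for the identity's coefficients by requiring tightness on both the deterministic and the trine strategy, and fix any remaining freedom by a small semidefinite-program-style ansatz on the Bell part.
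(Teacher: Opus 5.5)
There is a genuine gap: the proposal contains no proof of the upper bound, and everything else in parts (i) and (ii) rests on it. Your part (iii) is a correct direct verification, and your endgame for (ii) would work once its input relations are established. Your count $3+4+9+3=19$ only shows that the trine behaviour and $\Szero$ score $19/22$, i.e.\ $\omega_{\mathrm q}(G)\ge 19/22$. Tightness on two strategies is a consistency check a certificate must pass, not the certificate itself, so ``this gives $\omega_{\mathrm q}(G)\le 19/22$'' does not follow.

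The identity you call the ``target'' and defer to an SDP-style ansatz is the actual content of (i). The operator relations you use in (ii) are only available once specific nonnegative terms, with strictly positive coefficients, are known to vanish. These relations are: equal abort projections for \emph{all} pairs including $(1,1)$ and question $4$, $\sum_xA_x=0$, and the transport $B_x\mapsto -A_x$.

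Concretely, with nonabort effects $E_i$ (your $N_x$), $F_j$ (your $M_y$), $a_i=(A_i\otimes\Id)\ket\psi$ and $b_j=(\Id\otimes B_j)\ket\psi$, three ingredients are missing.
\begin{enumerate}
\item[(a)] The operator inequalities $A_i^2\preceq E_i$ and $B_j^2\preceq F_j$ for arbitrary three-outcome POVMs. This is what makes the bound cover nonprojective strategies such as the trine one.
\item[(b)] The sum of squares
\begin{align*}
  &\sum_i\langle E_i\rangle+\sum_j\langle F_j\rangle-\sum_{ij}C_{ij}\langle A_i\otimes B_j\rangle\\
  &\quad=\sum_i\langle E_i-A_i^2\rangle+\sum_j\langle F_j-B_j^2\rangle+\sum_i\norm{a_i-\tfrac12\textstyle\sum_jC_{ij}b_j}^{2}+\tfrac14\norm{\textstyle\sum_jb_j}^{2}.
\end{align*}
It rests on $C^2=4\Id_3-J_3$, and its last two squares give exactly $\sum_jb_j=0$ and $a_i=-b_i$.
\item[(c)] The translation of the score into this Bell quantity. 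This requires spreading the marginal terms $\langle E_i\rangle$ and $\langle F_j\rangle$ over question pairs using question-independent marginals, via coefficient arrays with unit row, respectively column, sums. The result must give every single-abort event, the equal-sign event at $(1,1)$, and every losing event at $(4,j)$ a strictly positive coefficient. Only then do you get abort consistency on every pair, and $\sum_km_k$ equal to the nonabort projection. The latter is the ``tightness'' you invoke in step 4.
\end{enumerate}
Invalid answers also need an argument. For example, merge answer $3$ into abort, which cannot lower the score. Then score equality on an ordinary pair, together with full rank, gives $P_{x,3}=0$. An exact identity with coefficient one on $\langle P_{x,3}\otimes\Id\rangle$ is not justified as written.

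Granting those relations, your contradiction takes a different route from the paper's. The paper uses an elementary obstruction. If reflections $R_i$ sum to zero and projections $N_i$ partitioning the identity satisfy $R_iN_i=N_i$, then every $N_j$ commutes with every $R_i$. On the range of $N_i$ this gives $R_j+R_k=-\Id$, which contradicts $R_j^2=\Id$.

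You instead use rigidity: reflections summing to zero are unitarily $T_i\otimes\Id_K$, so the $m_k$ are forced to be $\tfrac23$ times half-rank projections. This is valid and gives more, namely uniqueness of the trine POVM. However, you must prove that structure statement and the coefficient comparison that pins down the factor $\tfrac23$.

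Also fix a slip in step 4. The element $m_k$ annihilates the transported $+1$ projection of Bob's $B_k$, which on Alice's side is $(N-A_k)/2$. Hence $m_k=\tfrac23\cdot\tfrac12(N+A_k)$, not $\tfrac23q_k$. As written, the two claims in that step contradict each other unless $q_k=0$. Both projections have half the rank of $N$, so your conclusion $N=0$ survives.
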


In particular, the restricted optima consist exactly of arbitrary pure full-Schmidt-rank states equipped with deterministic abort measurements. In a local dilation to $\Szero$, this physical state becomes the auxiliary state; the explicit maps are given in Lemma~\ref{lem:abort-dilation}.

\subsection{Construction from trine correlations}

The three qubit observables used in the construction are
\begin{equation}\label{eq:trine}
  Z=\begin{pmatrix}1&0\\0&-1\end{pmatrix},\qquad
  X=\begin{pmatrix}0&1\\1&0\end{pmatrix},\qquad
  T_1=Z,\quad T_2=-\frac Z2+\frac{\sqrt3 X}{2},\quad T_3=-\frac Z2-\frac{\sqrt3 X}{2}.
\end{equation}
They are reflections and sum to zero. The word \emph{trine} will refer only to these three equally spaced directions and their associated effects. The Bell expression $\sum_{ij}C_{ij}\langle A_i\otimes B_j\rangle$ reaches $6$ for $A_i=T_i$, $B_j=-T_j$ on the maximally entangled qubit state. The three-outcome POVM $M_k=(\Id+T_k)/3$ also satisfies $p(k,+1\mid4,k)=0$ on that state. These are the measurement components from the prior constructions discussed in the introduction.

Simply adding an abort outcome would give the abort strategy Bell score zero, rather than six. To make both behaviors optimal, replace the constant bound by a bound in terms of nonabort effects $E_i,F_j$:
\begin{equation}\label{eq:conceptual-bound}
  L:=\sum_{ij}C_{ij}\langle A_i\otimes B_j\rangle -\sum_i\langle E_i\rangle-\sum_j\langle F_j\rangle\leq0.
\end{equation}
The qubit strategy gives $6-3-3=0$, while the abort strategy gives zero term by term. Penalizing inconsistent abort outcomes and the three events $p(k,+1\mid4,k)$ preserves both equalities. Under full rank and projectivity, the vanishing penalties become incompatible operator relations on any nonzero nonabort subspace.

It remains to implement this idea with a verification function in $\{0,1\}$. In the qubit strategy, each of the six off-diagonal base question pairs loses with probability $1/4$. Weight two on each such pair therefore costs a total of three relative to the abort strategy. Assigning weight three to $(1,1)$ and accepting only opposite nonabort signs compensates exactly for this cost. Both strategies win on the other two diagonal pairs and on the three additional pairs. This explains the weights in \eqref{eq:weights}. The following identity establishes the common optimal value.

\subsection{Quantum value and equality conditions}\label{sec:certificate}

First assume that Alice's invalid base effects are zero. On base questions write
\[
  A_i=P_{i,+}-P_{i,-},\quad E_i=P_{i,+}+P_{i,-},\qquad B_j=Q_{j,+}-Q_{j,-},\quad F_j=Q_{j,+}+Q_{j,-}.
\]
The signs in Alice's subscripts stand for her answers $1,2$. For arbitrary POVMs,
\begin{equation}\label{eq:variance}
  A_i^2\preceq E_i,\qquad B_j^2\preceq F_j.
\end{equation}
For example, for effects $R_0,R_+,R_-$ define the isometry $Vv=(R_0^{1/2}v,R_+^{1/2}v,R_-^{1/2}v)$ and $D=\operatorname{diag}(0,\Id,-\Id)$. Then $A=V^*DV$, $E=V^*D^2V$, and $E-A^2=V^*D(\Id-VV^*)DV\succeq0$.

Use any purification of the shared state and put $a_i=(A_i\otimes\Id)\ket\psi$, $b_j=(\Id\otimes B_j)\ket\psi$, with the purifying identity suppressed. Since $C^2=4\Id_3-J_3$, direct expansion gives
\begin{align}
  -L={}&\sum_i\langle E_i-A_i^2\rangle +\sum_j\langle F_j-B_j^2\rangle\notag\\
  &+\sum_i\norm{a_i-\frac12\sum_j C_{ij}b_j}^{\!2} +\frac14\norm{\sum_jb_j}^{\!2}\geq0.
  \label{eq:sos}
\end{align}
Here and below brackets mean expectation in the shared state, with the appropriate identity tensor factors understood.

To relate $L$ to the winning probability, define
\[
  m^A_{ij}=p(a\ne0,b=0\mid i,j),\qquad m^B_{ij}=p(a=0,b\ne0\mid i,j).
\]
Let $s_{11}$ be the probability of two equal nonabort signs on $(1,1)$, and let $\ell_{4j}$ be the losing probability on $(4,j)$. Set
\begin{equation}\label{eq:penalty-matrices}
  K_A=\begin{pmatrix}1&1&1\\2&3/2&3/2\\2&3/2&3/2\end{pmatrix},
  \qquad
  K_B=\begin{pmatrix}1&2&2\\1&3/2&3/2\\1&3/2&3/2\end{pmatrix}.
\end{equation}
The exact score identity is
\begin{equation}\label{eq:score-identity}
  22\omega_G-19 =L-s_{11}-\sum_{ij}(K_A)_{ij}m^A_{ij} -\sum_{ij}(K_B)_{ij}m^B_{ij}-\sum_j\ell_{4j}.
\end{equation}

For completeness, this identity can be checked using the marginal coefficient arrays
\[
  \alpha=\begin{pmatrix}-1&1&1\\0&1/2&1/2\\0&1/2&1/2\end{pmatrix},
  \qquad
  \beta=\begin{pmatrix}-1&0&0\\1&1/2&1/2\\1&1/2&1/2\end{pmatrix}.
\]
Each row of $\alpha$ and each column of $\beta$ sums to one. Consequently, using the question-independent marginals of a quantum correlation, the coefficient of a base outcome in $L$ is $C_{ij}ab-\alpha_{ij}\mathbf1_{a\ne0} -\beta_{ij}\mathbf1_{b\ne0}$, where abort is represented by the number zero. At an ordinary pair $(i,j)\ne(1,1)$, we have $\alpha_{ij}+\beta_{ij}=1$. The comparison with minus twice the losing indicator is
\begin{center}
  \begin{tabular}{lccc}
    \toprule
    Outcome & coefficient in $L$ & $-2\mathbf1_{\rm loss}$ & difference\\
    \midrule
    Both abort & $0$ & $0$ & $0$\\
    Only Alice nonabort & $-\alpha_{ij}$ & $-2$ & $2-\alpha_{ij}$\\
    Only Bob nonabort & $-\beta_{ij}$ & $-2$ & $2-\beta_{ij}$\\
    Correct nonabort signs & $0$ & $0$ & $0$\\
    Incorrect nonabort signs & $-2$ & $-2$ & $0$\\
    \bottomrule
  \end{tabular}
\end{center}
At $(1,1)$ compare instead with three times the winning indicator. The difference is one on each single-abort event and each equal-sign nonabort event, and zero otherwise. The eight ordinary pairs have total baseline weight $16$, and the extra pairs have baseline weight $3$. These observations give \eqref{eq:score-identity}, including its constant.

Equations~\eqref{eq:sos} and \eqref{eq:score-identity} prove the upper bound $19/22$ when there are no invalid base answers. For a general strategy, merge Alice's invalid answer $3$ into her abort answer on questions $1,2,3$. This cannot turn a win into a loss, so it cannot lower the score. The modified measurements are still POVMs, and still PVMs if the original ones were PVMs. Thus the same upper bound holds for the fixed answer sets of $G$. Since $\Szero$ attains it, this proves Proposition~\ref{prop:auxiliary}\textup{(i)}, including the classical value since $\Szero$ is deterministic.

\subsection{Full-rank projective optima}

To determine when equality is possible for full-rank projective strategies, we use the following elementary obstruction.

\begin{lemma}\label{lem:obstruction}
  On a nonzero Hilbert space, there do not exist reflections $R_1,R_2,R_3$ and orthogonal projections $N_1,N_2,N_3$ such that
  \[
    R_1+R_2+R_3=0,\qquad \sum_iN_i=\Id,\qquad R_iN_i=N_i.
  \]
\end{lemma}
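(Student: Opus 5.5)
The plan is to convert the relation $R_iN_i=N_i$ into an operator inequality between $N_i$ and the spectral projection of $R_i$. We then use $\sum_i N_i=\Id$ together with $\sum_i R_i=0$ to squeeze each $N_i$ between two multiples of that projection. This forces $\sum_i N_i$ to equal $\tfrac32\Id$, which contradicts $\sum_i N_i=\Id$.

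Let $P_i=(\Id+R_i)/2$. Since $R_i$ is a reflection, $P_i$ is the orthogonal projection onto the $+1$ eigenspace $\ker(R_i-\Id)$. The relation $R_iN_i=N_i$ says that the range of $N_i$ lies in this eigenspace. Because $N_i$ is an orthogonal projection, this gives $N_i\preceq P_i$ for each $i$.

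Next, use $\sum_j N_j=\Id$ to bound $N_1$ from below:
\[
  N_1=\Id-N_2-N_3\succeq \Id-P_2-P_3=-\tfrac12(R_2+R_3)=\tfrac12R_1 .
\]
The last equality is where $R_1+R_2+R_3=0$ enters. Compressing to the range of $P_1$, on which $R_1$ acts as $\Id$, we get $\langle u,N_1u\rangle\geq\tfrac12\norm{u}^2$ for every $u\in\operatorname{ran}P_1$.

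This lower bound forces $N_1=P_1$. Suppose $u\in\operatorname{ran}P_1$ is orthogonal to $\operatorname{ran}N_1$. Then $\langle u,N_1u\rangle=0\geq\tfrac12\norm u^2$, so $u=0$. Combined with $\operatorname{ran}N_1\subseteq\operatorname{ran}P_1$, this gives $N_1=P_1$. By symmetry in the indices, $N_i=P_i$ for all $i$. Hence
\[
  \Id=\sum_iN_i=\sum_iP_i=\tfrac12\Bigl(3\Id+\sum_iR_i\Bigr)=\tfrac32\Id ,
\]
which is impossible on a nonzero Hilbert space. This contradiction proves the lemma.

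I expect the only real obstacle to be finding the right inequality. One might instead try the anticommutation relations $\{R_i,R_j\}=-\Id$, which follow from squaring $\sum_iR_i=0$, or a trace count. The trace count alone is consistent, for instance with all ranks equal to $d/3$, so it does not give a contradiction. The decisive observation is the identity $\Id-P_2-P_3=\tfrac12R_1$, which turns the two projective conditions into the sandwich $\tfrac12R_1\preceq N_1\preceq P_1$. The rest of the argument is routine. No finite-dimensionality or structure theorem for the algebra generated by the $R_i$ is needed.
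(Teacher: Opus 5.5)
Your proof is correct, and it takes a different route from the paper.

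\textbf{The paper's route.} The paper uses a block decomposition. Since the $N_j$ are mutually orthogonal, for distinct $j,k$ one has $N_jR_jN_k=N_jN_k=0$ and $N_jR_kN_k=0$. The sum relation then kills the remaining off-diagonal block $N_jR_iN_k$. Hence every $R_i$ commutes with every $N_j$. On the nonzero range of some $N_i$, where $R_i=\Id$, the other two reflections would satisfy $R_j+R_k=-\Id$. This forces $R_j=-\Id/2$, which is not a reflection.

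\textbf{Your route.} Your argument is order-theoretic instead. The sandwich $\tfrac12R_1\preceq N_1\preceq P_1$ pins down $N_i=(\Id+R_i)/2$, and the contradiction is $\sum_iP_i=\tfrac32\Id$.

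\textbf{What each buys.} The paper's version uses only products, and it yields the extra structural fact that the $N_j$ commute with the $R_i$. Yours isolates exactly where projectivity enters. Both inequalities $N_i\preceq P_i$ (since $N_i=P_iN_iP_i\preceq P_i$) and $N_1\succeq\tfrac12R_1$ hold for arbitrary effects with $R_iN_i=N_i$. They are satisfied by the trine POVM $N_i=\tfrac23P_i$ of Lemma~\ref{lem:trine-algebra}. Only the spectral gap of a projection upgrades ``$N_1\succeq\tfrac12$ on $\operatorname{ran}P_1$'' to $N_1=P_1$. Your proof therefore makes the contrast with \eqref{eq:unique-povm} transparent: the only candidate projections sum to $\tfrac32\Id$, while their rescalings $\tfrac23P_i$ sum to $\Id$.

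Both arguments are dimension-free, as needed for the application in Proposition~\ref{prop:nontracial}.
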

\begin{proof}
  For distinct $j,k$, both $N_jR_jN_k$ and $N_jR_kN_k$ vanish. The sum relation makes the third block $N_jR_iN_k$ vanish as well. Hence every $N_j$ commutes with every $R_i$. On a nonzero range of $N_i$, the other two reflections satisfy $R_j+R_k=-\Id$. Substitution of $R_k=-\Id-R_j$ into $R_k^2=\Id$ gives $\Id+2R_j=0$, contradicting $R_j^2=\Id$.
\end{proof}

Consider now an optimal pure full-Schmidt-rank PVM strategy without invalid base answers. Every nonnegative term in the certificate vanishes. Since the coefficients of $m^A_{ij},m^B_{ij}$ are strictly positive,
\begin{equation}\label{eq:flags}
  (E_i\otimes\Id)\ket\psi=(\Id\otimes F_j)\ket\psi \quad\text{for every }i,j.
\end{equation}
Indeed, the squared norm of the difference is the sum of the two abort-mismatch probabilities, since $E_i,F_j$ are projections. By \eqref{eq:faithful}, all $E_i$ equal one projection $P$, and all $F_j$ equal one projection $Q$. In particular $A_i^2=P$, $B_j^2=Q$. Writing $M_i=P_{4,i}$, the vanishing extra losses similarly gives
\begin{equation}\label{eq:extra-partition}
  \sum_{i=1}^3M_i=P,\qquad P_{4,0}=\Id-P.
\end{equation}

The vanishing squared norms in \eqref{eq:sos} yield $\sum_jb_j=0$ and $a_i=-b_i$, hence
\begin{equation}\label{eq:mirror}
  \sum_iA_i=0,\qquad (A_i\otimes\Id)\ket\psi=-(\Id\otimes B_i)\ket\psi.
\end{equation}
The forbidden event at $(4,i)$ has zero probability, so $(M_i\otimes Q_{i,+})\ket\psi=0$. Since $Q_{i,+}=(Q+B_i)/2$, equations~\eqref{eq:flags} and \eqref{eq:mirror} turn this into $(M_i(P-A_i)\otimes\Id)\ket\psi=0$. Full rank gives
\begin{equation}\label{eq:support-relations}
  M_iA_i=A_iM_i=M_i.
\end{equation}
On $PH_A$, the $A_i$ are reflections summing to zero and the $M_i$ form an orthogonal partition of the identity. Lemma~\ref{lem:obstruction} forces $P=0$. Equation~\eqref{eq:flags} and full rank then give $Q=0$. All measurements therefore abort.

For an optimal strategy with the original answer sets, perform the merging operation above. The merged strategy is optimal and thus always aborts, in particular on Bob's side. For every Alice base question $i$ there is an ordinary pair $(i,j)\ne(1,1)$ of positive weight. On this pair, an original invalid answer loses whereas the merged abort answer wins. Equality of the original and merged scores forces the invalid answer probability to be zero. Full rank makes its effect zero. Thus the original strategy also always aborts.

We have identified all the measurement effects of every restricted optimum. The following elementary observation supplies the local dilation.

\begin{lemma}\label{lem:abort-dilation}
  Every strategy with $P_{x,a}=\delta_{a,0}\Id$ and $Q_{y,b}=\delta_{b,0}\Id$ admits a local dilation to $\Szero$, regardless of its shared state.
\end{lemma}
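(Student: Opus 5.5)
The plan is to write the isometries down explicitly. Let $\ket\psi\in H_A\otimes H_B\otimes H_E$ be any purification of the shared state $\rho$. Take $K_A=H_A$ and $K_B=H_B$, and define
\[
  U_A:H_A\to\C\otimes H_A,\quad v\mapsto\ket0\otimes v,\qquad U_B:H_B\to\C\otimes H_B,\quad w\mapsto\ket0\otimes w.
\]
These are isometries, indeed unitaries, since $\widetilde H_A=\widetilde H_B=\C$. Set $\ket\eta=\ket\psi\in K_A\otimes K_B\otimes H_E$, which is a unit vector. After reordering the tensor factors as in Definition~\ref{def:dilation}, $U\ket\psi=\ket0\otimes\ket0\otimes\ket\psi=\ket{\widetilde\psi}\otimes\ket\eta$. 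Thus \eqref{eq:dilation-state} holds with $\varepsilon=0$.

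For the measurement conditions, check the two cases $a=0$ and $a\neq0$ separately. When $a=0$, we have $P_{x,0}=\Id$ and $\widetilde P_{x,0}=\Id_{\C}$, so \eqref{eq:dilation-alice} reduces to the state equation just verified. When $a\neq0$, both $P_{x,a}$ and $\widetilde P_{x,a}$ are zero, so both sides of \eqref{eq:dilation-alice} vanish. Bob's condition \eqref{eq:dilation-bob} follows in the same way.

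The only point needing care is that the definition quantifies over \emph{every} purification. The construction above is uniform in the purification: $\ket\eta$ is simply the purification itself. So no step is a real obstacle. It remains only to note that no assumption on $\rho$, rank, or projectivity was used, which gives the phrase ``regardless of its shared state''.
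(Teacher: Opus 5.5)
Your proposal is correct and matches the paper's proof essentially verbatim: the same choices $K_A=H_A$, $K_B=H_B$, $U_Av=\ket0\otimes v$, $U_Bw=\ket0\otimes w$, $\ket\eta=\ket\psi$, followed by the same case check on $a=0$ versus $a\neq0$. Your remark that the construction is uniform in the purification is also the paper's point that the physical state is kept in the auxiliary factor.
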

\begin{proof}
  For any purification $\ket\psi\in H_A\otimes H_B\otimes H_E$, take $K_A=H_A$, $K_B=H_B$, and
  \begin{equation}\label{eq:abort-isometries}
    U_Av=\ket0\otimes v,\qquad U_Bw=\ket0\otimes w,\qquad \ket\eta=\ket\psi.
  \end{equation}
  After ordering the tensor factors as in Definition~\ref{def:dilation}, these maps give
  \begin{align*}
    U\ket\psi&=\ket{00}\otimes\ket\eta,\\
    U(P_{x,a}\otimes\Id)\ket\psi &=\delta_{a,0}\ket{00}\otimes\ket\eta =(\widetilde P_{x,a}\otimes\Id)\ket{00}\otimes\ket\eta,\\
    U(\Id\otimes Q_{y,b})\ket\psi &=\delta_{b,0}\ket{00}\otimes\ket\eta =(\Id\otimes\widetilde Q_{y,b})\ket{00}\otimes\ket\eta.
  \end{align*}
  These are the three exact local-dilation equations.
\end{proof}

The construction retains all physical systems, including their entanglement, in the auxiliary factor; for a mixed state, $\ket\eta$ is the chosen purification. Full Schmidt rank was used above to force the deterministic measurement effects; the extraction in Lemma~\ref{lem:abort-dilation} requires no rank assumption.

Applying Lemma~\ref{lem:abort-dilation} with the physical state as the auxiliary state completes the proof of Proposition~\ref{prop:auxiliary}\textup{(ii)}.

\subsection{A nonprojective optimum}

Define $\Strine$ using
\begin{equation}\label{eq:qubit-strategy}
  \ket{\Phi^+}=\frac{\ket{00}+\ket{11}}{\sqrt2},\qquad A_i=T_i,\qquad B_j=-T_j,\qquad M_k=\frac{\Id+T_k}{3}.
\end{equation}
All abort effects and Alice's invalid base effects are zero. The sign effects are $(\Id\pm T_i)/2$ on Alice's side and $(\Id\mp T_j)/2$ on Bob's side. Alice uses the $M_k$ on question~4. These sum to $\Id$, are positive, and have nonzero eigenvalue $2/3$.

For real matrices, the identity $\bra{\Phi^+}R\otimes S\ket{\Phi^+}=\Tr(RS^{\mathsf T})/2$ gives
\[
  \langle A_i\otimes B_j\rangle=
  \begin{cases}-1&i=j,\\1/2&i\ne j.\end{cases}
\]
Thus the diagonal base pairs win surely and each off-diagonal base pair wins with probability $3/4$. Each extra pair wins surely because
\[
  p(k,+1\mid4,k)=\frac12\Tr\left( \frac{\Id+T_k}{3}\frac{\Id-T_k}{2}\right)=0.
\]
The weighted score is $3+2\cdot2+6\cdot2\cdot(3/4)+3=19$. This proves optimality of $\Strine$.

Its correlation differs from that of $\Szero$, so exact local dilation is impossible. More directly, Alice's abort effect on question~4 is zero, whereas the reference effect is $\Id$. The left side of \eqref{eq:dilation-alice} is then one for every choice of isometries and auxiliary unit vector. This proves Proposition~\ref{prop:auxiliary}\textup{(iii)}.

\section{The entangled-reference game}\label{sec:chsh}

We now combine $G$ with CHSH to construct $H$ and prove Theorem~\ref{thm:main}.

\subsection{The game and its reference strategy}

The CHSH game asks $u,v\in\{0,1\}$ uniformly and accepts signs $a,b$ exactly when $ab=(-1)^{uv}$. Its quantum value is $\gamma=(2+\sqrt2)/4$, attained by $\ket{\Phi^+}$ with Alice observables $Z,X$ and Bob observables $(Z+X)/\sqrt2,(Z-X)/\sqrt2$. These state and measurement effects are certified by the exact CHSH self-test~\cite{McKague}.

For $H$, take question sets
\[
  \mathcal X_H=\{g_1,g_2,g_3,g_4,c_0,c_1\},\qquad \mathcal Y_H=\{g_1,g_2,g_3,c_0,c_1\},
\]
where $g$ and $c$ label the two components. The referee selects $G$ or CHSH with equal probability and then samples its questions in the usual way:
\begin{equation}\label{eq:h-distribution}
  \pi_H(g_i,g_j)=\frac{W_{ij}}{44},\qquad \pi_H(c_u,c_v)=\frac18.
\end{equation}
All other question pairs have probability zero; set their verification function to zero. Keep the answer sets $\{0,1,2,3\}$ for Alice and $\{0,+1,-1\}$ for Bob. On the $g$ questions use the rules of $G$. On the $c$ questions Alice's answers $1,2$ represent signs $+1,-1$, respectively; Bob answers a sign. Every other answer on these questions loses.

The reference $\Sref$ shares $\ket{\Phi^+}$ and uses the following measurements:
\[
  \begin{array}{c|c|c}
    \text{Component}&\text{Alice}&\text{Bob}\\ \hline
    G&\text{always answer }0&\text{always answer }0\\
    \text{CHSH}&Z,\ X&(Z+X)/\sqrt2,\ (Z-X)/\sqrt2
  \end{array}
\]
For an observable $R$, the two sign effects are $(\Id\pm R)/2$; all invalid-answer effects are zero. Thus $\Sref$ is a full-rank projective strategy on $\C^2\otimes\C^2$.

\subsection{Values and exact self-testing}

For every strategy, the score in $G$ is at most $19/22$. The CHSH component is bounded by $\gamma$: merging invalid answers into valid signs cannot decrease its score, and gives an ordinary binary-answer strategy. The reference $\Sref$ attains both bounds. Therefore
\begin{equation}\label{eq:h-value}
  \omega_{\mathrm q}(H) =\frac12\frac{19}{22}+\frac12\frac{2+\sqrt2}{4} =\frac{60+11\sqrt2}{88}.
\end{equation}
The classical component values are $19/22$ and $3/4$. Since the question sets are disjoint, deterministic response functions attaining them can be combined, giving
\[
  \omega_{\mathrm c}(H) =\frac12\frac{19}{22}+\frac12\frac34=\frac{71}{88}.
\]
This proves Theorem~\ref{thm:main}\textup{(i)}.

Every optimal strategy for $H$ must maximize both component scores. For a pure full-Schmidt-rank PVM strategy, Proposition~\ref{prop:auxiliary} consequently makes all its $G$ measurements deterministic. We next apply the CHSH self-test, accounting for the additional answers in the fixed alphabets.

Let $E_u,F_v$ be the total valid-answer effects on the CHSH questions, and let $A_u,B_v$ be the differences of the two valid sign effects. For any strategy,
\[
  \omega_{\rm CHSH} =\frac18\sum_{u,v}\langle E_u\otimes F_v +(-1)^{uv}A_u\otimes B_v\rangle \leq\frac{4+2\sqrt2}{8}.
\]
The CHSH bound $2\sqrt2$ applies to self-adjoint contractions, and each $\langle E_u\otimes F_v\rangle$ is at most one. Equality forces all four of these expectations, and hence their marginals, to be one. Every invalid effect therefore annihilates every purification of the shared state. Merging those effects into valid signs gives binary measurements with the same action on each purification. The exact CHSH self-test supplies local isometries extracting $\ket{\Phi^+}$ and the usual CHSH measurement relations.

For a full-rank strategy, the invalid effects are zero as operators. The resulting isometries also handle the deterministic $G$ effects: for every $g$ question the physical and reference answer-$0$ effects are identities, while all other effects are zero. The state extraction equation thus supplies the answer-$0$ measurement equation, and the remaining equations vanish. This proves Theorem~\ref{thm:main}\textup{(ii)}.

To prove part~\textup{(iii)}, use $\ket{\Phi^+}$ and the standard CHSH measurements, and use $\Strine$ on the $G$ questions. Both component scores are optimal. All measurements are projective except Alice's three-outcome trine POVM on $g_4$. Her answer-$0$ effect on $g_4$ is zero, whereas the corresponding reference effect is $\Id_2$. For any isometries and auxiliary unit vector, the left side of \eqref{eq:dilation-alice} for this effect is one. An exact local dilation to $\Sref$ is therefore impossible.

The two qubit optima use the same Bell state and the same CHSH measurements. More generally, the CHSH argument above extracts a Bell state from every unrestricted optimum of $H$. The distinction between the optimal strategies lies in their $G$ measurements.

\section{Algebraic consequences}\label{sec:algebra}

We now analyze the auxiliary game $G$, which supplies the measurement separation in $H$. On the local state support, optimality gives an algebra whose representations and compatible states can be classified completely. The deterministic abort reference corresponds to its scalar summand; the trine strategy corresponds to its matrix summand. We first establish this classification and then examine projective realizations on larger local spaces and the role of their local states.

\subsection{Supported relations and optimal states}\label{sec:classification}

We begin with the elementary representation-theoretic fact behind both the algebra and the optimal-state classification.

\begin{lemma}\label{lem:trine-algebra}
  Suppose $R_1,R_2,R_3$ are reflections on a nonzero finite-dimensional space with $\sum_iR_i=0$. There are a space $K$ and a unitary identifying $R_i$ with $T_i\otimes\Id_K$. If positive operators $N_i$ also satisfy $\sum_iN_i=\Id$ and $R_iN_i=N_i$, then necessarily
  \begin{equation}\label{eq:unique-povm}
    N_i=\frac{\Id+R_i}{3},\qquad N_i^2=\frac23N_i.
  \end{equation}
\end{lemma}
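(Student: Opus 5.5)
The plan is to first pin down the representation theory of the three reflections and then deduce the form of the $N_i$.

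\emph{Step 1: the reflections generate a copy of $M_2(\C)$.} From $R_3=-(R_1+R_2)$ and $R_3^2=\Id$ we get
\[
  \Id=R_1^2+R_2^2+R_1R_2+R_2R_1=2\Id+\{R_1,R_2\},
\]
so $\{R_1,R_2\}=-\Id$. Define
\[
  Z'=R_1,\qquad X'=\frac{2R_2+R_1}{\sqrt3}.
\]
Then $X'$ is self-adjoint, and $X'^2=(4\Id+\Id+2\{R_1,R_2\})/3=\Id$. Also $\{Z',X'\}=(2\{R_1,R_2\}+2\Id)/\sqrt3=0$. So $Z',X'$ are anticommuting reflections, and $Y'=iX'Z'$ (up to sign convention) completes a Pauli triple. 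A representation of the Pauli relations on a finite-dimensional space is a representation of $M_2(\C)$. Since $M_2(\C)$ is simple, the representation is a multiple of the standard one. This gives a unitary identifying $Z'\mapsto Z\otimes\Id_K$ and $X'\mapsto X\otimes\Id_K$. Concretely, $K$ is the $+1$ eigenspace of $Z'$, and $X'$ maps it isomorphically onto the $-1$ eigenspace. The space is nonzero, so $K\neq0$. Inverting the definitions, $R_1=Z'$, $R_2=(\sqrt3X'-Z')/2$ and $R_3=-R_1-R_2$ become $T_i\otimes\Id_K$ by \eqref{eq:trine}.

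\emph{Step 2: the $N_i$ have rank-one pieces.} Since $N_i\succeq0$ and $R_iN_i=N_i$, taking adjoints gives $N_iR_i=N_i$. Hence $N_i=P_iN_iP_i$, where $P_i=(\Id+R_i)/2=\ket{t_i}\bra{t_i}\otimes\Id_K$ and $\ket{t_i}$ is the $+1$ eigenvector of $T_i$. Therefore $N_i=\ket{t_i}\bra{t_i}\otimes C_i$ for some $C_i\succeq0$ on $K$.

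\emph{Step 3: solve the completeness relation.} The condition $\sum_i\ket{t_i}\bra{t_i}\otimes C_i=\Id_2\otimes\Id_K$ must now be solved. The three rank-one projections $\ket{t_i}\bra{t_i}=(\Id+T_i)/2$ are linearly independent in $M_2(\C)$, because $\Id,Z,X$ span them. Expanding in this basis gives $\Id=\sum_i\tfrac23\ket{t_i}\bra{t_i}$, since $\sum_iT_i=0$. Linear independence then forces $C_i=\tfrac23\Id_K$ for every $i$.

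Substituting back, $N_i=\tfrac23P_i=(\Id+R_i)/3$. Because $P_i$ is a projection, $N_i^2=\tfrac49P_i=\tfrac23N_i$.

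The main obstacle is Step 1, namely justifying the block decomposition cleanly in finite dimensions. I will do this with the explicit eigenspace argument described there rather than by citing general $C^*$-algebra theory.
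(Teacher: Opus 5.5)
Your proof is correct and matches the paper's argument step for step: your $X'=(2R_2+R_1)/\sqrt3$ equals the paper's $(R_2-R_3)/\sqrt3$, and the eigenspace identification is the same. The factorization $N_i=q_i\otimes C_i$ and your linear-independence step are the paper's comparison of the $\Id,X,Z$ coefficients.

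One wording fix in Step 3: the projections $q_i=(\Id+T_i)/2$ are independent because they span the three-dimensional span of $\Id,Z,X$, not because they lie in it. This holds since $\Id=\tfrac23\sum_i q_i$, $Z=2q_1-\Id$, and $X=\tfrac{2}{\sqrt3}(q_2-q_3)$.
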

\begin{proof}
  Squaring $R_i+R_j=-R_k$ gives $R_iR_j+R_jR_i=-\Id$ for $i\ne j$. Consequently $Z'=R_1$, $X'=(R_2-R_3)/\sqrt3$ satisfy
  \[
    (Z')^2=(X')^2=\Id,\qquad Z'X'=-X'Z'.
  \]
  The reflection $X'$ maps the $+1$ eigenspace of $Z'$ unitarily onto its $-1$ eigenspace. Choosing this identification makes $Z'=Z\otimes\Id_K$, $X'=X\otimes\Id_K$, giving the stated form of the $R_i$.

  Put $q_i=(\Id+T_i)/2$, a rank-one projection. Positivity and $R_iN_i=N_i$ imply $N_i=q_i\otimes K_i$ for some $K_i\succeq0$. Comparing the $\Id,X,Z$ coefficients in $\sum_iq_i\otimes K_i=\Id_2\otimes\Id_K$ gives $K_1=K_2=K_3=2\Id_K/3$. This proves \eqref{eq:unique-povm}.
\end{proof}

The equality conditions of Section~\ref{sec:game} suggest the following relations for the nonabort projection and Alice's effects. We will show below that they hold for every optimal correlation after passing to the state supports. A universal unital $C^*$-algebra for a list of bounded operator relations means a unital $C^*$-algebra generated by elements satisfying those relations, such that every other realization of the relations induces a unique unital $*$-homomorphism from it. Consider the relations
\begin{equation}\label{eq:universal-relations}
  \begin{gathered}
    p=p^*=p^2,\qquad x_i=x_i^*,\qquad x_i^2=p,\qquad\sum_i x_i=0,\\
    m_i\succeq0,\qquad\sum_i m_i=p,\qquad x_im_i=m_i \quad(i=1,2,3).
  \end{gathered}
\end{equation}
Here $p$ represents Alice's common nonabort projection after compression to the local state support. All the generators have norm at most one.

\begin{proposition}\label{prop:algebra}
  The universal unital $C^*$-algebra $\mathcal B$ of \eqref{eq:universal-relations} is
  \begin{equation}\label{eq:algebra}
    \mathcal B\cong\C\oplus M_2(\C).
  \end{equation}
  Inside this algebra,
  \begin{equation}\label{eq:algebra-defect}
    m_i=\frac{p+x_i}{3},\qquad m_i^2=\frac23m_i, \qquad\sum_i(m_i-m_i^2)=\frac p3.
  \end{equation}
  Under the identification in \eqref{eq:algebra},
  \[
    p=0\oplus\Id_2,\qquad x_i=0\oplus T_i,\qquad m_i=0\oplus\frac{\Id_2+T_i}{3}.
  \]
  If $\mathcal J$ is the closed two-sided $*$-ideal generated by the $m_i^2-m_i$, then
  \begin{equation}\label{eq:quotient}
    \mathcal J=0\oplus M_2(\C),\qquad \mathcal B/\mathcal J\cong\C.
  \end{equation}
\end{proposition}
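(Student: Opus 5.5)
The plan is to compute $\mathcal B$ by first deriving the stated identities inside any realization, and then exhibiting an explicit realization that is universal. First, in any unital $C^*$-algebra satisfying \eqref{eq:universal-relations}, I will show that $p$ is central and that all generators lie in the corner $p\mathcal B p$. Since $x_i^2=p$ and $x_i$ is self-adjoint, $x_i$ commutes with $p=x_i^2$. Moreover, $(1-p)x_i$ is self-adjoint with square $(1-p)x_i^2=(1-p)p=0$, so $x_i=px_i=x_ip$. For the $m_i$, the relation $0\preceq m_i\preceq\sum_j m_j=p$ gives $(1-p)m_i(1-p)=0$, hence $m_i^{1/2}(1-p)=0$, and therefore $m_i=pm_ip$. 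From $x_im_i=m_i$ and self-adjointness we also get $m_ix_i=m_i$. Consequently every generator commutes with $p$, and $\mathcal B=(1-p)\mathcal B\oplus p\mathcal B$. Both summands are unital (with units $1-p$ and $p$) whenever they are nonzero. The first summand is generated by $1-p$ alone, so it is $\C$ or $0$.

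Next I identify $p\mathcal B$ by passing to a faithful representation. Represent $\mathcal B$ faithfully on a Hilbert space, restrict to the range of $p$, and apply Lemma~\ref{lem:trine-algebra} there. That lemma is stated for finite-dimensional spaces, so its proof needs a small check in infinite dimensions. The anticommutation argument still works: $Z'=R_1$ and $X'=(R_2-R_3)/\sqrt3$ are anticommuting reflections, $X'$ swaps the two eigenspaces of $Z'$, and so they generate a copy of $M_2(\C)\otimes\Id_K$. The positivity step also goes through: $R_iN_i=N_i$ forces $N_i=q_i\otimes K_i$, and comparing coefficients in $\sum_i q_i\otimes K_i=\Id$ gives $K_i=2\Id_K/3$. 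Hence $m_i=(p+x_i)/3$ in every representation, and therefore in $\mathcal B$. This yields \eqref{eq:algebra-defect}. Indeed, $m_i^2=(p+2x_i+p)/9=\tfrac23 m_i$, and then
\[
  \sum_i(m_i-m_i^2)=\tfrac13\sum_i m_i=\tfrac p3.
\]
It also follows that $p\mathcal B$ is generated by $x_1,x_2,x_3$, and hence by $Z',X'$ subject to the Pauli relations with unit $p$. That corner is therefore a quotient of $M_2(\C)$, and since $M_2(\C)$ is simple it is either $M_2(\C)$ or $0$.

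To finish, I show that both summands are actually present by checking that the concrete assignment $p=0\oplus\Id_2$, $x_i=0\oplus T_i$, $m_i=0\oplus(\Id_2+T_i)/3$ in $\C\oplus M_2(\C)$ satisfies \eqref{eq:universal-relations}. Positivity of $m_i$ is clear, and $T_i(\Id+T_i)=\Id+T_i$ since $T_i^2=\Id$. This realization generates all of $\C\oplus M_2(\C)$: the element $1-p$ gives the scalar summand, and $Z,X$ generate $M_2(\C)$. The universal property then gives a surjection $\mathcal B\to\C\oplus M_2(\C)$. On the other hand, the previous paragraph shows $\dim\mathcal B\le 1+4$. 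So this surjection is an isomorphism, which gives \eqref{eq:algebra} together with the stated images of the generators.

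For \eqref{eq:quotient}, I compute $m_i^2-m_i=-\tfrac13m_i=0\oplus\bigl(-(\Id+T_i)/9\bigr)$. This element is nonzero in the simple summand $M_2(\C)$, so the two-sided ideal it generates is all of $0\oplus M_2(\C)$. Since each of these elements is also contained in $0\oplus M_2(\C)$, we get $\mathcal J=0\oplus M_2(\C)$ and $\mathcal B/\mathcal J\cong\C$. The main obstacle is the first step, namely extracting $m_i=pm_ip$ and $x_i=px_i$ cleanly from the relations. The rest is the lemma plus a dimension count.
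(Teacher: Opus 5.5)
Your proposal is correct and follows essentially the same route as the paper. Both arguments show that $p$ is central, reduce the $p$-corner to the span of $p,Z',X',Z'X'$ through the Pauli relations, apply Lemma~\ref{lem:trine-algebra} in an arbitrary representation to force $m_i=(p+x_i)/3$, and conclude by matching the dimension bound with the concrete realization in $\C\oplus M_2(\C)$. The differences are cosmetic: you derive $px_i=x_i$ and $pm_i=m_i$ explicitly where the paper asserts them, and you obtain $\mathcal J=0\oplus M_2(\C)$ from simplicity of $M_2(\C)$ rather than from $p=3\sum_i(m_i-m_i^2)\in\mathcal J$.
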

\begin{proof}
  The relations $x_i^2=p$ and $x_i=x_i^*$ imply $px_i=x_ip=x_i$. Similarly $0\preceq m_i\preceq p$ implies $pm_i=m_ip=m_i$. Thus $p$ is central. On the $(\Id-p)$ summand all listed generators vanish and only scalar multiples of the identity remain.

  On the $p$ summand, with unit $p$, put $z=x_1$ and $t=(x_2-x_3)/\sqrt3$. They satisfy $z^2=t^2=p$, $zt=-tz$, and are self-adjoint. Every word reduces to the linear span of $p,z,t,zt$. The Pauli representation maps these four elements to linearly independent matrices spanning $M_2(\C)$. Therefore their universal algebra is $M_2(\C)$. The argument of Lemma~\ref{lem:trine-algebra}, valid also with an arbitrary multiplicity Hilbert space, forces $m_i=(p+x_i)/3$ in every representation. Thus the $m_i$ introduce no additional generators. This proves the universal property and \eqref{eq:algebra}, as well as \eqref{eq:algebra-defect}. Every generator of $\mathcal J$ belongs to the matrix summand, whereas \eqref{eq:algebra-defect} gives $p\in\mathcal J$. Since $p$ is the identity of that summand, $\mathcal J=0\oplus M_2(\C)$. The scalar representation survives, giving \eqref{eq:quotient}.
\end{proof}

The algebra has two irreducible representations up to unitary equivalence: the scalar representation and the defining representation of $M_2(\C)$. Arbitrary representations are direct sums of these, with multiplicity. The next step is to determine which states on this algebra are compatible with the bipartite optimality conditions.

Write $T(\mathcal B)$ for the tracial state space of $\mathcal B$: its elements are positive linear functionals $\tau$ with $\tau(\Id)=1$ and $\tau(ab)=\tau(ba)$ for all $a,b\in\mathcal B$. Since a full matrix algebra has a unique normalized trace,
\begin{equation}\label{eq:trace-family}
  T(\mathcal B)=\{\tau_\lambda:0\leq\lambda\leq1\},\qquad \tau_\lambda(c\oplus a)=(1-\lambda)c+\frac{\lambda}{2}\Tr(a).
\end{equation}
The following theorem identifies these traces with the states induced by optimal strategies.

To state the correspondence explicitly, define effects inside $\mathcal B$ by
\begin{equation}\label{eq:algebra-effects}
  \begin{gathered}
    e_{i,0}=e_{4,0}=f_{j,0}=\Id-p,\qquad e_{i,3}=0,\\
    e_{i,1}=(p+x_i)/2,\qquad e_{i,2}=(p-x_i)/2,\\
    e_{4,k}=m_k,\qquad f_{j,+1}=(p-x_j)/2,\qquad f_{j,-1}=(p+x_j)/2.
  \end{gathered}
\end{equation}
Here $i,j,k\in\{1,2,3\}$, and all indices use the actual answer labels of $G$. The $f_{j,b}$ represent Bob's effects transferred to Alice's algebra by the equality relations; they are not required to commute with the $e_{x,a}$ inside $\mathcal B$.

Let $p_0,p_1$ be the correlations of $\Szero,\Strine$, respectively, and let $\Qq$ denote the set of all finite-dimensional quantum correlations with the question and answer sets of $G$.

\begin{theorem}[Optimal correlations and tracial states]\label{thm:face}
  For every $\tau\in T(\mathcal B)$, the formula
  \begin{equation}\label{eq:state-correlation}
    p_\tau(a,b\mid x,y)=\tau(e_{x,a}f_{y,b})
  \end{equation}
  defines an optimal quantum correlation. The map $\tau\mapsto p_\tau$ is an affine bijection from $T(\mathcal B)$ onto the optimal correlations of $G$ in finite dimensions. It sends $\tau_\lambda$ to $(1-\lambda)p_0+\lambda p_1$. Thus
  \begin{equation}\label{eq:face}
    F:=\{p\in\Qq:\omega_G(p)=19/22\} =\{(1-\lambda)p_0+\lambda p_1:0\leq\lambda\leq1\}.
  \end{equation}
  Every finite-dimensional optimal strategy, after purification and compression to the Schmidt supports, induces the corresponding state $\tau_\lambda$ on $\mathcal B$. For any base question $i$, the parameter is recovered by
  \[
    \lambda=\tau_\lambda(p)=\Pr(\text{Alice does not abort}\mid i).
  \]
\end{theorem}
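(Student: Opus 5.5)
The plan has two directions: first show that every $\tau_\lambda$ gives an optimal correlation, then show that every optimal correlation arises from some $\tau_\lambda$.

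\textbf{Forward direction.} Since $\tau_\lambda=(1-\lambda)\tau_0+\lambda\tau_1$ and $\tau\mapsto p_\tau$ is affine, it suffices to compute $p_{\tau_0}$ and $p_{\tau_1}$ from the definition \eqref{eq:state-correlation} and \eqref{eq:algebra-effects}.
\begin{itemize}
\item On the scalar summand, $p=0$. The only nonzero effects are the abort effects, all equal to $1$, so $p_{\tau_0}=p_0$.
\item On $M_2(\C)$, the normalized trace gives $\tau_1(e_{x,a}f_{y,b})=\frac12\Tr(e_{x,a}f_{y,b})$.
\item With $f_{j,\pm1}=(\Id\mp T_j)/2$, this equals $\frac12\Tr\bigl(e_{x,a}f_{y,b}^{\mathsf T}\bigr)$, since the $T_j$ are real symmetric.
\item By the identity used for $\Strine$, this is exactly the correlation of $\Strine$ on $\ket{\Phi^+}$. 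Hence $p_{\tau_1}=p_1$.
\end{itemize}
So $p_{\tau_\lambda}=(1-\lambda)p_0+\lambda p_1$. This is a quantum correlation, realized as a direct sum of $\Szero$ and $\Strine$ with weights $1-\lambda$ and $\lambda$ via a flag register. It is optimal because the score is affine. Injectivity is immediate, since $\lambda$ is recovered as the nonabort probability $p_\tau(a\ne0\mid i)=\tau(p)$.

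\textbf{Reverse direction.} Take a finite-dimensional optimal strategy, purify it, and compress each party to the Schmidt support. The support projection commutes with nothing in general, so use the standard fact instead: writing $\ket\psi$ with full Schmidt rank on the supports, compressed effects $\Pi P\Pi$ give the same correlation and remain POVMs. Now the strategy is optimal, pure, and full rank, but possibly nonprojective. Go through the equality conditions of \eqref{eq:sos} and \eqref{eq:score-identity} again, now without projectivity.
\begin{itemize}
\item $\langle E_i-A_i^2\rangle=0$ together with full rank gives $E_i=A_i^2$.
\item Invalid effects vanish, exactly as in the earlier merging argument.
\item Vanishing single-abort probabilities give $\langle (\Id-E_i)\otimes F_j\rangle=\langle E_i\otimes(\Id-F_j)\rangle=0$.
\end{itemize}
Since $E_i$ need not be a projection, I would first deduce $(E_i\otimes\Id)\ket\psi=(\Id\otimes F_j)\ket\psi$. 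The route is to use $E_i\preceq\Id$ and $\|(E_i\otimes\Id-\Id\otimes F_j)\psi\|^2\le\langle E_i\otimes(\Id-F_j)\rangle+\langle(\Id-E_i)\otimes F_j\rangle$. This inequality follows from $E^2\preceq E$ and $F^2\preceq F$ after expanding.

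Full rank then gives common $E_i=P$ and $F_j=Q$. Next use the transfer map $T\mapsto T'$ with $(T\otimes\Id)\psi=(\Id\otimes T')\psi$, namely $T'=\rho_B^{-1/2}(\cdots)^{\mathsf T}\rho_B^{1/2}$. Because $Q=P'$ is positive, this shows $P$ commutes with $\rho_A$. Showing $P$ is a projection is the \emph{main obstacle}.

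To attack it, I would use $A_i^2=P$ and $a_i=-b_i$. The operator $B_i$ transfers to $-A_i$, and $B_i^2\preceq Q$ transfers to give $\langle A_i^2\rangle\le\langle P\rangle$ with equality. Then $P^2$, being the transfer of $Q^2\preceq Q$, gives $\langle P-P^2\rangle\ge0$. I still need equality to conclude $P^2=P$.

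I expect to get it by combining the sum-zero condition with $A_i^2=P$ and the reflection structure on $\overline{\operatorname{ran}P}$. Squaring $\sum A_i=0$ gives $3P+\sum_{i\ne j}A_iA_j=0$. If that fails, I would work directly with the relations modulo $\rho_A$-null elements: prove that the GNS representation of Alice's local state $\varphi=\Tr(\rho_A\,\cdot)$ satisfies \eqref{eq:universal-relations}, with $x_i=A_i$ and $m_i=P_{4,i}$ ($M_iA_i=M_i$ comes as before from the forbidden events and full rank).

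Once the relations hold, Proposition~\ref{prop:algebra} yields a representation of $\mathcal B$, and the correlation becomes $\varphi(e_{x,a}f_{y,b})$, with Bob's effects transferred to Alice's side. It remains to show $\varphi$ is tracial on $\mathcal B$. For this I would use the symmetric transfer identities: $A_i$ transfers to $-B_i$, and conversely Bob's $B_i$ transfers back to $-A_i$. This gives $\varphi(ab)=\langle a\otimes b'\rangle=\varphi(ba)$ on generators, hence on all of $\mathcal B$, since $\mathcal B$ is the span of generator products and $\C\oplus M_2$ is finite-dimensional. By \eqref{eq:trace-family}, $\varphi=\tau_\lambda$, and then $p=p_{\tau_\lambda}$. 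This establishes surjectivity, \eqref{eq:face}, and the formula $\lambda=\tau(p)$.
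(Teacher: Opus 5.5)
Your reverse direction stalls at the step you yourself flag: you never prove that the common operator $P=E_i$ is a projection, and none of the routes you sketch can prove it. Transferring $Q^2\preceq Q$ only gives $\langle P-P^2\rangle\ge0$, which already holds since $0\preceq P\preceq\Id$. The relations you keep---$A_i^2=P$, $\sum_iA_i=0$, the mirror relation, $(P\otimes\Id)\ket\psi=(\Id\otimes Q)\ket\psi$, $[P,\rho_A]=0$---are all satisfied by $A_i=cT_i$, $B_i=-cT_i$, $P=Q=c^2\Id_2$ on $\ket{\Phi^+}$ with $0<c<1$. So squaring $\sum_iA_i=0$ cannot force $P^2=P$, and invoking a reflection structure on $\overline{\operatorname{ran}P}$ assumes the conclusion. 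The GNS fallback does not help either. Under full Schmidt rank $\varphi=\Tr(\rho_A\,\cdot)$ is faithful, so nothing is divided out, and $P$ is a projection in the GNS representation only if it already was one. Everything after this depends on it: $M_iP=M_i$, and hence $M_iA_i=M_i$ from the forbidden events, needs $0\preceq M_i\preceq P$ with $P$ a projection, and the relations \eqref{eq:universal-relations} require $p^2=p$.

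The missing ingredient is positivity of the individual effects, which the relations you keep have discarded. The paper uses it directly. $E_i=A_i^2$ makes $P_{i,\pm}=(A_i^2\pm A_i)/2\succeq0$, so each eigenvalue $t$ of $A_i$ satisfies $t^2\ge|t|$ and $|t|\le1$. Hence $\spec(A_i)\subseteq\{-1,0,1\}$, and both players' base measurements are PVMs before the flag relations are derived. Your own mismatch inequality would also do it. Its right-hand side is zero and its slack is exactly $\langle(E_i-E_i^2)\otimes\Id\rangle+\langle\Id\otimes(F_j-F_j^2)\rangle$, so full rank gives $E_i^2=E_i$ and $F_j^2=F_j$. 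You must also run that inequality on the pairs $(4,j)$, where zero loss excludes single aborts, to obtain $\sum_kM_k=P$ as an operator identity; you do not mention this.

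With these repairs, the rest of your argument is correct and differs from the paper's. The paper proves $\rho_A^{(1)}=\Id_2/2\otimes\sigma_A$ and identifies the active state and Bob's observables by uniqueness of purification. You instead get traciality of $\varphi\circ\pi_A$ from the self-adjoint transfers $P\mapsto Q$ and $A_i\mapsto-B_i$ (the $b$ with $\varphi(ab)=\varphi(ba)$ for all $a$ form an algebra). You then invoke $T(\mathcal B)=\{\tau_\lambda\}$ and transfer Bob's effects to the $f_{y,b}$. That route is shorter and never needs Bob's observables explicitly.
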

\begin{proof}
  Any finite-dimensional quantum correlation has a pure full-Schmidt-rank POVM realization: purify the state, assigning the purifying space to one party for this purpose, and compress both parties' effects to their Schmidt supports. This operation preserves the correlation; it need not be a local dilation in the sense of Definition~\ref{def:dilation}. Write $\rho_A$ for Alice's reduced density matrix in this realization. Initially merge invalid base answers into abort, as before.

  For an optimal full-rank POVM realization without invalid answers, \eqref{eq:sos} gives $E_i=A_i^2$ and $F_j=B_j^2$ as operator identities: each difference is positive with zero expectation against an invertible marginal. Since $P_{i,\pm}=(A_i^2\pm A_i)/2\succeq0$ and $\norm{A_i}\leq1$, every eigenvalue $t$ of $A_i$ satisfies $t^2\pm t\geq0$ and $|t|\leq1$. Thus $\spec(A_i)\subseteq\{-1,0,1\}$. The base measurements are therefore PVMs, as are Bob's. The vanishing abort mismatches now give common nonabort projections $P,Q$ and \eqref{eq:flags}.

  Let $S=\sum_iM_i=\Id-P_{4,0}$, which at this stage is only an effect. For any fixed Bob question, zero abort mismatch gives
  \[
    \norm{(S\otimes\Id-\Id\otimes Q)\ket\psi}^{\!2} \leq\langle S\otimes(\Id-Q)+(\Id-S)\otimes Q\rangle=0.
  \]
  Together with \eqref{eq:flags} and full rank, this proves $S=P$. Hence $0\preceq M_i\preceq P$. The forbidden-event argument still applies to positive $M_i$: the positive operator $M_i\otimes Q_{i,+}$ has zero expectation and so annihilates $\ket\psi$. It follows that \eqref{eq:mirror} and \eqref{eq:support-relations} hold unchanged. Thus $p\mapsto P$, $x_i\mapsto A_i$, $m_i\mapsto M_i$ defines a representation $\pi_A$ of $\mathcal B$.

  All measurements are block diagonal for the abort and nonabort subspaces. Equation~\eqref{eq:flags} implies the orthogonal decomposition
  \[
    \ket\psi=\ket{\psi_0}+\ket{\psi_1},\qquad \ket{\psi_0}\in(\Id-P)H_A\otimes(\Id-Q)H_B,\quad \ket{\psi_1}\in PH_A\otimes QH_B.
  \]
  Put $\lambda=\norm{\psi_1}^2$. The abort component contributes $(1-\lambda)p_0$. If $\lambda>0$, the normalized active state has full Schmidt rank on $PH_A\otimes QH_B$. Lemma~\ref{lem:trine-algebra} identifies Alice's observables as $T_i\otimes\Id$ and her extra effects as $(\Id+T_i)\otimes\Id/3$.

  To determine the active state and Bob's observables, note that the mirror relation with a self-adjoint $B_i$ makes Alice's reduced density matrix $\rho_A^{(1)}$ commute with every $A_i$. This follows by taking the partial trace of the two sides of $(A_i\otimes\Id)\ket{\psi_1}\bra{\psi_1}$ and using \eqref{eq:mirror} also on the bra. Hence $\rho_A^{(1)}=\Id_2/2\otimes\sigma_A$ for an invertible density matrix $\sigma_A$ on $K_A$. A Schmidt decomposition, or uniqueness of purification, now identifies the active state by a Bob unitary with $\ket{\Phi^+}\otimes\ket\eta$. Full Schmidt rank and \eqref{eq:mirror} then force Bob's observables to be $-T_i^{\mathsf T}\otimes\Id=-T_i\otimes\Id$ on his active space. Thus the active correlation is exactly $p_1$. More specifically, pulling Alice's local state back through $\pi_A$ gives
  \[
    \Tr(\rho_A\,\pi_A(c\oplus a)) =(1-\lambda)c+\lambda\Tr\bigl((\Id_2/2\otimes\sigma_A) (a\otimes\Id)\bigr) =\tau_\lambda(c\oplus a).
  \]
  Terms whose weight is zero are omitted. This proves the asserted traciality on $\mathcal B$ without imposing any traciality assumption on the physical strategy. The block diagonal measurements do not detect cross terms between $\psi_0$ and $\psi_1$, so the correlation is $(1-\lambda)p_0+\lambda p_1$.

  For the original fixed answer sets, equality of scores before and after merging implies that invalid base answers have zero probability. Indeed, in the merged optimum an invalid answer, now counted as abort, can occur only with Bob aborting. On any ordinary pair in its row this changes an original loss into a merged win. Its marginal must therefore be zero, and full Schmidt rank makes its supported effect zero. Thus the original strategy induces the same representation $\pi_A$. This proves the asserted inclusion for all optimal correlations. The reverse inclusion follows by taking block direct sums of the two endpoint measurement families and superposing their normalized shared states with amplitudes $\sqrt{1-\lambda}$ and $\sqrt\lambda$. For these endpoints, \eqref{eq:state-correlation} follows directly from the abort effects and the maximally entangled qubit identity, respectively. Linearity proves it for every $\tau_\lambda$. These quantum realizations also establish positivity and normalization of the probabilities in that formula. The nonabort marginal recovers $\tau_\lambda(p)=\lambda$, so the correspondence is injective as well as surjective.
\end{proof}

The theorem identifies the role of the matrix summand in observed correlations: its weight is exactly the weight of the nonabort endpoint. It also distinguishes the universal algebra from the algebra generated in a particular supported realization. The latter is $\C$ at $\lambda=0$, $M_2(\C)$ at $\lambda=1$, and $\C\oplus M_2(\C)$ when both weights are positive. The pulled-back endpoint traces are not faithful on the universal algebra, but the local state is faithful on the algebra of its supported realization. The auxiliary density matrix $\sigma_A$ may be arbitrary; traciality on the measurement algebra does not require it to be maximally mixed.

Combining the theorem with Proposition~\ref{prop:algebra} explains the restricted self-test algebraically. Requiring the supported effects $m_i$ to be projections kills $\mathcal J=0\oplus M_2(\C)$. Of the compatible states $\tau_\lambda$, only $\tau_0$ vanishes on this ideal and hence factors through the scalar quotient. In a full-Schmidt-rank PVM strategy the support is the entire local space, so its measurement relations really do define a representation of this quotient. The nonabort weight must therefore be zero. This is why the restriction removes every optimal correlation except $p_0$.

\begin{corollary}\label{cor:classical}
  The classical value of $G$ is $19/22$, and its only optimal classical correlation is $p_0$. In particular, every point of $F\setminus\{p_0\}$ is nonlocal, even though $G$ has no quantum advantage in winning probability.
\end{corollary}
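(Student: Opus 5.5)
The first claim, that the classical value is $19/22$, is already contained in Proposition~\ref{prop:auxiliary}\textup{(i)}. The upper bound holds for all quantum strategies, hence for classical correlations, and $\Szero$ is deterministic and attains it. The remaining task is uniqueness of the optimal classical correlation. Afterwards, nonlocality of $F\setminus\{p_0\}$ will follow from Theorem~\ref{thm:face}.

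\emph{Uniqueness.} A classical correlation is a convex combination of deterministic local response functions. Each such function is realized by a quantum strategy: take a one-dimensional product state and deterministic PVMs, with each effect $0$ or $\Id$. Since $\omega_G$ is affine and bounded by $19/22$, an optimal classical correlation is a convex combination of optimal deterministic ones. It therefore suffices to classify the optimal deterministic strategies. A deterministic strategy on $\C\otimes\C$ with state $\ket0\otimes\ket0$ has full Schmidt rank, and its effects are projections. It is thus a pure full-Schmidt-rank PVM strategy. By Proposition~\ref{prop:auxiliary}\textup{(ii)}, if optimal it has abort effect $\Id$ on every question and all other effects zero. Its correlation is therefore $p_0$, and every optimal classical correlation equals $p_0$.

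\emph{Nonlocality.} Let $p\in F\setminus\{p_0\}$. Suppose $p$ were classical. It would be an optimal classical correlation, hence equal to $p_0$ by the previous step, a contradiction. So $p$ is nonlocal. Theorem~\ref{thm:face} identifies $F\setminus\{p_0\}$ with $\{(1-\lambda)p_0+\lambda p_1:0<\lambda\le1\}$. By injectivity of $\tau\mapsto p_\tau$ these points differ from $p_0$; concretely, their nonabort marginal equals $\lambda>0$. They all attain the value $19/22=\omega_{\mathrm c}(G)$, so the absence of a quantum advantage coexists with nonlocality.

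I expect the only step needing care is the reduction to deterministic strategies. One must check that a deterministic response function, including the invalid answer $3$, is a legitimate strategy in the class of Proposition~\ref{prop:auxiliary}\textup{(ii)}. That proposition was proved for fixed answer sets, with invalid answers handled by merging, so it covers this case. Everything else is convexity and the earlier results.
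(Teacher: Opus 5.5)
Your proof is correct and follows the paper's argument. A deterministic response function is a one-dimensional full-Schmidt-rank PVM strategy, so Proposition~\ref{prop:auxiliary}\textup{(ii)} forces it to always abort when optimal, and a classical mixture is optimal only if every component with positive weight is optimal. Your explicit deduction that $F\setminus\{p_0\}$ is nonlocal, and your check that the invalid answer $3$ is covered by the merging argument, spell out steps the paper leaves implicit.
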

\begin{proof}
  Each deterministic strategy is a one-dimensional full-rank PVM strategy. By Proposition~\ref{prop:auxiliary}, its score can equal $19/22$ only when it always aborts. A classical mixture attains the bound only if every component with positive weight does so. Hence its correlation must be $p_0$.
\end{proof}

\subsection{Why projective realizations require state information}\label{sec:dilation}

The trine POVM has a three-dimensional projective dilation whose shared state has local rank two. We describe its measurement algebra and compression map, then characterize the local states of optimal projective realizations with nonzero nonabort probability.

Let $V:\C^2\to\C^3$ embed the first two coordinates and write $s=VV^*=\operatorname{diag}(1,1,0)$. Define the orthonormal vectors
\begin{equation}\label{eq:dilation-vectors}
  e_1=\frac1{\sqrt3}\begin{pmatrix}\sqrt2\\0\\1\end{pmatrix},\qquad
  e_2=\frac1{\sqrt6}\begin{pmatrix}-1\\-\sqrt3\\\sqrt2\end{pmatrix},\qquad
  e_3=\frac1{\sqrt6}\begin{pmatrix}-1\\\sqrt3\\\sqrt2\end{pmatrix},
  \qquad \Pi_i=e_ie_i^*.
\end{equation}
As in \cite[Eqs.~(37)--(39)]{Baptista}, up to labels and the party on which the measurement acts, these projections dilate the trine POVM:
\begin{equation}\label{eq:compression}
  \Phi:M_3(\C)\longrightarrow M_2(\C),\qquad \Phi(a)=V^*aV,\qquad \Phi(\Pi_i)=M_i.
\end{equation}
The dilation is minimal: the vectors in the ranges of the $\Pi_iV$ span $\C^3$.

On Alice's base questions take observables $T_i\oplus0$ and abort projection $\Id-s$. On question~4 take $\Pi_1,\Pi_2,\Pi_3$ and zero abort effect. With Bob's original qubit measurements and the state
\[
  \ket{\psi_*}=\frac{\ket{00}+\ket{11}}{\sqrt2} \quad\text{in }\C^3\otimes\C^2,
\]
this is a PVM realization of $p_1$. It is optimal, but Alice's reduced state is $s/2$ and hence does not have full rank. Notice that the base nonabort projection is $s$, whereas $\Pi_1+\Pi_2+\Pi_3=\Id_3$. Thus the supported relation $\sum_i m_i=p$ holds on the shared state and after compression, but fails as an operator identity in this realization.

\begin{proposition}\label{prop:compression}
  Alice's measurements in this realization generate $M_3(\C)$. The compression $\Phi$ is unital and completely positive, and
  \begin{equation}\label{eq:compression-defect}
    \Phi(\Pi_i^2)-\Phi(\Pi_i)^2 =\frac13M_i,\qquad \norm{\Phi(\Pi_i^2)-\Phi(\Pi_i)^2}=\frac29.
  \end{equation}
  The elements $a$ for which both
  \[
    \Phi(a^*a)=\Phi(a)^*\Phi(a),\qquad \Phi(aa^*)=\Phi(a)\Phi(a)^*
  \]
  hold are exactly the block diagonal algebra $M_2(\C)\oplus\C$ relative to $s$. In particular, compression is not a $*$-homomorphism.
\end{proposition}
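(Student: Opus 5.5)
Write Alice's generators as $a_i=T_i\oplus 0$ (relative to $\C^3=s\C^3\oplus(\Id-s)\C^3$), $\Id-s$, and $\Pi_1,\Pi_2,\Pi_3$. I will prove the four claims in order: generation, complete positivity, the defect formula, and the multiplicative domain.

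\emph{Generation.} The corner $sM_3(\C)s$ is generated by $s$ and the $a_i$, because the argument of Proposition~\ref{prop:algebra} shows that $T_1$ and $(T_2-T_3)/\sqrt3$ generate $M_2(\C)$. The generated algebra also contains $\Id-s$. Next take the off-diagonal corner $s\Pi_1(\Id-s)=s e_1e_1^*(\Id-s)$. It is the rank-one operator $\frac{\sqrt2}{3}\ket{1}\bra{3}$, which is nonzero. An algebra that contains $M_2(\C)\oplus\C$ and a nonzero off-diagonal matrix unit is all of $M_3(\C)$. This gives the first claim.

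\emph{Complete positivity and the defect.} The map $\Phi(a)=V^*aV$ with $V$ an isometry is completely positive and unital by Stinespring. Since $\Pi_i^2=\Pi_i$, we have $\Phi(\Pi_i^2)=M_i$. By Lemma~\ref{lem:trine-algebra}, $M_i^2=\frac23M_i$, so the defect equals $M_i-\frac23M_i=\frac13M_i$. Its norm is $\frac13\cdot\frac23=\frac29$, because $M_i$ has eigenvalues $0$ and $2/3$. I will check $\Phi(\Pi_i)=M_i$ directly: $V^*e_i$ has squared norm $2/3$ and points along the $+1$ eigenvector of $T_i$. For example, for $i=1$ we get $V^*e_1=(\sqrt2/\sqrt3,0)$, giving $\frac23\ket0\bra0=(\Id+Z)/3$; the other two cases are analogous.

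\emph{Multiplicative domain.} Write $a=\begin{pmatrix}\alpha&\beta\\\gamma&\delta\end{pmatrix}$ in the $2+1$ block decomposition. Then $\Phi(a)=\alpha$ and
\[
  \Phi(a^*a)=\alpha^*\alpha+\gamma^*\gamma,\qquad \Phi(aa^*)=\alpha\alpha^*+\beta\beta^*.
\]
The two equalities therefore hold if and only if $\gamma=0$ and $\beta=0$, that is, exactly when $a$ is block diagonal. Conversely, every block diagonal $a$ satisfies both equalities. This identifies the multiplicative domain as $M_2(\C)\oplus\C$. The projection $\Pi_1$ has a nonzero off-diagonal block and lies outside this domain. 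Consistently with this, its defect is nonzero, so $\Phi$ is not a $*$-homomorphism.

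I expect no serious obstacle in this argument. The only step needing care is the explicit check that $\Phi(\Pi_2)$ and $\Phi(\Pi_3)$ match $(\Id+T_{2,3})/3$ with the sign conventions of \eqref{eq:trine}. If a label mismatch appears there, I will relabel via the ``up to labels'' remark, which does not affect any of the claimed statements.
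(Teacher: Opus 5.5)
Your proposal is correct and follows essentially the paper's proof: the off-diagonal corner $s\Pi_1(\Id-s)$ gives generation (your $\ket1\bra3$ is the paper's $\ket0\bra2$ in its $0$-based labels), and your block computation $\Phi(a^*a)-\Phi(a)^*\Phi(a)=\gamma^*\gamma$ is exactly the paper's identity $V^*a^*(\Id-s)aV$. The direct check of $\Phi(\Pi_2)$ and $\Phi(\Pi_3)$ against $(\Id+T_{2,3})/3$ works with the labels of \eqref{eq:trine} as given, so no relabeling is needed.
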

\begin{proof}
  The base measurements generate $M_2(\C)\oplus\C$. Their algebra contains $s$ and all matrix units on the first two coordinates. Moreover, $s\Pi_1(\Id-s)=(\sqrt2/3)\ket0\bra2$ is nonzero. Multiplying this matrix unit by the base matrix units and taking adjoints produces every matrix unit in $M_3(\C)$.

  The map $\Phi$ preserves the identity, and its amplification to matrices of any size is again compression by an isometry; thus each amplification preserves positivity, which is the definition of complete positivity. For every $a\in M_3(\C)$,
  \begin{equation}\label{eq:general-defect}
    \Phi(a^*a)-\Phi(a)^*\Phi(a)=V^*a^*(\Id-s)aV.
  \end{equation}
  Its vanishing means $(\Id-s)as=0$. The analogous identity for $aa^*$ gives $sa(\Id-s)=0$. Together these say $as=sa$, proving the block diagonal characterization, often called the multiplicative domain of $\Phi$. Finally, $\Pi_i^2=\Pi_i$ and $M_i^2=2M_i/3$ give \eqref{eq:compression-defect}; $\norm{M_i}=2/3$ gives the norm.
\end{proof}

The change from $M_3(\C)$ to $M_2(\C)$ in this example is therefore compression by a completely positive map. It is not a quotient by a two-sided ideal: the simple algebra $M_3(\C)$ has no nonzero $*$-homomorphism into $M_2(\C)$. The defect in \eqref{eq:compression-defect} measures precisely how a projective measurement becomes nonprojective on the state support.

The failure of multiplicativity has a direct consequence for optimality relations. For the same realization, define the local state $\varphi(a)=\Tr((s/2)a)$ and its zero space
\begin{equation}\label{eq:null-ideal}
  N_\varphi=\{a\in M_3(\C):\varphi(a^*a)=0\} =\{a:as=0\}.
\end{equation}
It is a left ideal, since $as=0$ implies $(ba)s=0$ for every $b$. It is not a two-sided ideal. Indeed, with $r=s-\Id$,
\begin{equation}\label{eq:left-ideal-witness}
  (r\otimes\Id)\ket{\psi_*}=0,\qquad (r\Pi_1\otimes\Id)\ket{\psi_*}=-\frac13\ket{2,0}\ne0.
\end{equation}
Thus $r\in N_\varphi$ but $r\Pi_1\notin N_\varphi$. The state equation $r\ket{\psi_*}=0$ is therefore not preserved under right multiplication by measurement effects. As a two-sided ideal generator, the nonzero $r$ generates all of $M_3(\C)$, so imposing $r=0$ as an operator relation would give the zero quotient.

There is also a necessary condition that does not depend on these particular matrices. A state $\varphi$ on a unital $C^*$-algebra is \emph{faithful} if $\varphi(a^*a)=0$ implies $a=0$, and is \emph{tracial} if $\varphi(ab)=\varphi(ba)$ for all $a,b$.

\begin{proposition}\label{prop:nontracial}
  In any optimal PVM realization of $G$ whose correlation is not $p_0$, Alice's state on the algebra generated by her measurements is neither faithful nor tracial. More precisely, its zero left ideal $N_\varphi$ is not a two-sided ideal. This statement holds also for a commuting-operator realization, consisting of a shared unit vector and two measurement families on one, possibly infinite-dimensional, Hilbert space, with every Alice effect commuting with every Bob effect.
\end{proposition}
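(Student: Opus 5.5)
Suppose, toward a contradiction, that $N_\varphi$ is a two-sided ideal in the algebra $\mathcal A$ generated by Alice's projective effects. Here $\varphi(a)=\langle\psi,(a\otimes\Id)\psi\rangle$, or $\langle\psi,a\psi\rangle$ in the commuting-operator setting. If $\varphi$ is faithful or tracial, then $N_\varphi$ is automatically two-sided: in the faithful case $N_\varphi=0$; in the tracial case $\varphi(b^*a^*ab)=\varphi(abb^*a^*)\le\norm{b}^2\varphi(aa^*)=\norm{b}^2\varphi(a^*a)$. So it suffices to prove the ``more precisely'' claim. I then pass to the GNS-type quotient. Two-sidedness means that $\pi(a)\colon \mathcal A\psi\to\mathcal A\psi$, $a'\psi\mapsto aa'\psi$, is well defined. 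More useful: the map $a\psi\mapsto$ ``$a$ modulo $N_\varphi$'' turns $\mathcal A/N_\varphi$ into an algebra, and the closed subspace $\mathcal K=\overline{\mathcal A\psi}$ carries a representation of $\mathcal A/N_\varphi$. The key point is that an operator relation holding on $\psi$ holds on all of $\mathcal K$. If $r\psi=0$, then $r\in N_\varphi$, hence $ra\in N_\varphi$ for all $a$, so $r a\psi=0$. Thus every state equation $r\psi=0$ with $r\in\mathcal A$ becomes the operator identity $r|_{\mathcal K}=0$. Moreover $\mathcal K$ is invariant under $\mathcal A$, so the compressions of Alice's projections to $\mathcal K$ are still projections.

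Next I extract the state equations from optimality. These are obtained as in the proof of Theorem~\ref{thm:face}, but now applied to a vector state in a commuting-operator model. The SOS identity \eqref{eq:sos} and the score identity \eqref{eq:score-identity} only use expectations and commutation between the parties, so they remain valid there. I also merge invalid base answers into abort; since the original answers are PVMs, the merged ones are too. Optimality then gives the following equations on $\psi$, where Bob's operators act on the same space and commute with Alice's:
\begin{itemize}
\item $E_i\psi=F_j\psi$ for all $i,j$, and $S\psi=F_j\psi$ with $S=\sum_k M_k$;
\item $\sum_i A_i\psi=0$ and $A_i\psi=-B_i\psi$;
\item $M_iQ_{i,+}\psi=0$.
\end{itemize}
The goal is to turn the Bob-dependent equations into Alice-only equations $r\psi=0$ with $r\in\mathcal A$. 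The ones I need are:
\begin{itemize}
\item $(E_i-E_1)\psi=0$ and $(S-E_1)\psi=0$;
\item $\bigl(\sum_i A_i\bigr)\psi=0$;
\item $M_i(E_1-A_i)\psi=0$.
\end{itemize}
The last one is obtained by rewriting $Q_{i,+}\psi=\tfrac12(F_i+B_i)\psi$ and moving the Bob operators past $M_i$ onto $\psi$ via commutation, exactly as in the derivation of \eqref{eq:support-relations}. The merged invalid answers need separate care. Their effects have zero expectation, so they lie in $N_\varphi$ and vanish on $\mathcal K$. The merged and unmerged algebras therefore agree on $\mathcal K$.

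By the first paragraph, all these become operator identities on $\mathcal K$. Let $P$ be the restriction of $E_1$ to $\mathcal K$, which is a projection. Since $A_i$ is a PVM observable, $A_i^2=E_i=P$ on $\mathcal K$. On $P\mathcal K$ we then have:
\begin{itemize}
\item reflections $A_i$ with $\sum_iA_i=0$;
\item projections $M_i$ (restrictions of projections to an invariant subspace) with $\sum_iM_i=P$;
\item $A_iM_i=M_i$, obtained from $M_iA_i=M_i$ on $\mathcal K$ by taking adjoints; this is legitimate because $\mathcal K$ reduces every self-adjoint element of $\mathcal A$.
\end{itemize}
Lemma~\ref{lem:obstruction} is finite-dimensional only in appearance: its proof uses only algebraic manipulations, so it holds on any Hilbert space. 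It therefore forces $P\mathcal K=0$. Consequently $E_1\psi=0$, Alice always aborts, and by $E_i\psi=F_j\psi$ Bob always aborts as well. The correlation is then $p_0$, which is a contradiction.

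I expect the main obstacle to be the bookkeeping showing that every relation needed for Lemma~\ref{lem:obstruction} is genuinely of the form $r\psi=0$ with $r$ in Alice's algebra. In particular, $\sum_iM_i=P$ must hold as an identity on $\mathcal K$ and not merely after compression; the explicit example \eqref{eq:left-ideal-witness} shows that exactly this relation fails without two-sidedness. The other delicate point is checking that the SOS argument needs neither finite dimension nor full rank in the commuting-operator setting, and that only the positivity of each term is used. I would first verify $S\psi=E_1\psi$ directly from the vanishing abort-mismatch estimate. That estimate uses only projectivity of $F_j$ and the bound $0\preceq S\preceq\Id$.
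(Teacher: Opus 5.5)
Your argument is correct and is essentially the paper's proof: the same certificate-derived vector equations in the commuting model are promoted to operator relations by two-sidedness and then contradicted via Lemma~\ref{lem:obstruction}, with faithfulness and traciality reduced to two-sidedness exactly as in the paper; the only difference is that you work concretely on the cyclic subspace $\overline{\mathcal A\psi}$ (the GNS representation, whose kernel is exactly $N_\varphi$ when $N_\varphi$ is two-sided) rather than in an abstract faithful representation of $\mathcal A/N_\varphi$. Your unproved claim that the invalid base effects have zero expectation does hold, by the score-preserving merging argument in the proof of Theorem~\ref{thm:face} (the paper invokes it only at the end, which suffices); your aside that two-sidedness makes $a'\psi\mapsto aa'\psi$ well defined is inaccurate, since that map is always defined, but nothing in your argument depends on it.
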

\begin{proof}
  The score certificate uses only positivity and commutation between the two parties, so it applies in this model as well. Represent the shared state by a unit vector, purifying if necessary. After merging invalid answers, its equality conditions give, with the tensor factors omitted in the commuting model,
  \[
    (E_i-F_j)\ket\psi=0,\qquad \left(\sum_iA_i\right)\ket\psi=0,\qquad (A_i+B_i)\ket\psi=0.
  \]
  The extra zero losses give $(S-E_i)\ket\psi=0$, where $S=\sum_kM_k$, and the forbidden events give $M_i(E_i-A_i)\ket\psi=0$.

  Suppose $N_\varphi$ were a two-sided ideal. It is norm closed, so the quotient of Alice's algebra by $N_\varphi$ is a unital $C^*$-algebra with nonzero identity. All Alice operators annihilating the state become zero there. The vector equations consequently become exactly the operator equations used in the PVM proof: a common projection $p$, $a_i^2=p$, $\sum_i a_i=0$, $\sum_i m_i=p$, and $a_im_i=m_i$, with the $m_i$ projections. Lemma~\ref{lem:obstruction}, applied in a faithful representation of this quotient, forces $p=0$. Alice therefore aborts with probability one. The flag equalities force Bob to abort as well, and the original invalid answers have zero probability by the score-preserving merging argument. This gives $p_0$, a contradiction.

  A faithful state has $N_\varphi=\{0\}$, a two-sided ideal. For a tracial state, if $a\in N_\varphi$ and $b$ is arbitrary, then
  \[
    \varphi((ab)^*(ab))=\varphi(abb^*a^*) \leq\norm b^2\varphi(aa^*) =\norm b^2\varphi(a^*a)=0.
  \]
  Thus its zero left ideal is also closed under right multiplication. Both possibilities have been excluded.
\end{proof}

In the concrete realization, the representation of $M_3(\C)$ is faithful, while its local state is neither faithful nor tracial: for the matrix units $E_{uv}=\ket u\bra v$,
\[
  \varphi(E_{02}E_{20})=\tfrac12,\qquad \varphi(E_{20}E_{02})=0.
\]
After compression, the state on $M_2(\C)$ is the normalized trace $\operatorname{tr}_2=\Tr/2$, and $\varphi=\operatorname{tr}_2\circ\Phi$. The nonmultiplicative map $\Phi$ is therefore precisely what relates the nontracial projective realization to the tracial supported description.

Thus the scalar quotient in \eqref{eq:quotient} describes projectivity on the state support. Projective realizations before compression are related to that description through their local states and compression maps, as the example and Proposition~\ref{prop:nontracial} make explicit. The certificate also gives the commuting-operator value $19/22$, since it remains valid for commuting measurements and $\Szero$ attains the bound.

\subsection{Exact optimality and approximation}\label{sec:robust}

The algebraic distinction also explains why optimizing a Bell functional under the two restrictions can produce the correct supremum while giving an incomplete description of its optimizers. The following density statement isolates that issue at the level of correlations.

\begin{corollary}\label{cor:density}
  Let $\Qf\subseteq\Qq$ be the correlations admitting a pure full-Schmidt-rank PVM realization. In the Euclidean topology of the finite array of probabilities,
  \begin{equation}\label{eq:density}
    \overline{\Qf}=\overline{\Qq},\qquad \Qf\cap F=\{p_0\},\qquad \overline{\Qf\cap F}\ne\overline{\Qf}\cap F=F.
  \end{equation}
\end{corollary}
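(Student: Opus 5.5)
The three claims are handled separately. The middle equality $\Qf\cap F=\{p_0\}$ is immediate from Proposition~\ref{prop:auxiliary}(ii): every optimal pure full-Schmidt-rank PVM strategy always aborts, so its correlation is $p_0$. Conversely, $p_0$ is realized by the one-dimensional strategy $\Szero$, which is full rank and projective. Hence $\overline{\Qf\cap F}=\{p_0\}$. By Theorem~\ref{thm:face}, $F$ is a segment and not a point, since $p_1\ne p_0$. So once $\overline{\Qf}\cap F=F$ is established, the inequality in \eqref{eq:density} follows.

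For $\overline{\Qf}=\overline{\Qq}$, the inclusion $\subseteq$ is trivial. For the reverse inclusion I show that every $p\in\Qq$ is a limit of points of $\Qf$. Start from a finite-dimensional realization of $p$ and make the state pure by purification, assigning the purifying space to Bob. Then apply Naimark dilation to every POVM on both sides. This gives a pure PVM realization of $p$ on larger spaces $H_A'\otimes H_B'$, whose state may not have full Schmidt rank.

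Next, pad the local dimensions so that $\dim H_A'=\dim H_B'=d$, extending the projective measurements by letting each measurement act on the added directions through some fixed answer. Then perturb the state: $\ket{\psi_\epsilon}\propto\sqrt{1-\epsilon}\ket{\psi}+\sqrt{\epsilon}\ket{\Omega}$, where $\ket\Omega$ is a maximally entangled vector on $\C^d\otimes\C^d$.

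The main obstacle is to guarantee that $\ket{\psi_\epsilon}$ has full Schmidt rank for small $\epsilon>0$. I plan the following argument. Write $\ket{\psi_\epsilon}$ as a $d\times d$ coefficient matrix $\Psi+tI$ in bases where $\ket\Omega$ corresponds to the identity, up to normalization. The determinant of $\Psi+tI$ is a nonzero polynomial in $t$, so the matrix is invertible for all but finitely many $t$. Invertibility of the coefficient matrix is exactly full Schmidt rank. The measurements are unchanged and remain PVMs. The correlation depends continuously on the state, so these full-rank PVM correlations tend to $p$ as $t\to0$.

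Finally, $\overline{\Qf}\cap F=F$ follows, because $F\subseteq\Qq\subseteq\overline{\Qq}=\overline{\Qf}$. The same perturbation can also be carried out explicitly: perturbing the rank-two state $\ket{\psi_*}$ of Section~\ref{sec:dilation} after padding gives full-rank PVM approximants of $p_1$, and mixing them with the abort block approximates each $(1-\lambda)p_0+\lambda p_1$. Only the abstract padding-and-perturbation argument is needed for the corollary.
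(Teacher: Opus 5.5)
Your proposal is correct and follows essentially the same route as the paper: purify, apply a simultaneous local Naimark dilation, pad both sides to a common dimension with projective extensions, and perturb the state to full Schmidt rank with the measurements fixed; the other two claims are read off from Proposition~\ref{prop:auxiliary}(ii) and Theorem~\ref{thm:face}, with $p_1\ne p_0$ giving the inequality. The only cosmetic difference is the perturbation step. You add a maximally entangled vector and invoke the monic determinant $\det(\Psi+tI)$, whereas the paper works in Schmidt bases and replaces the zero Schmidt coefficients by small positive ones, which gives full rank directly.
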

\begin{proof}
  Purification and simultaneous local Naimark dilation give a pure PVM realization of any $p\in\Qq$. Pad the two local spaces to a common dimension and extend every measurement projectively to the unused space. In Schmidt bases, replace any zero Schmidt coefficients by positive coefficients tending to zero, and normalize the vector. With the measurements fixed, this gives pure full-Schmidt-rank PVM correlations converging to $p$. The first equality follows. Proposition~\ref{prop:auxiliary} gives the second, and Theorem~\ref{thm:face} gives the last assertion.
\end{proof}

In particular, the supremum of every linear Bell functional is the same over $\Qf$ and $\Qq$. Agreement of all these optimal values does not imply agreement of the strategies or correlations attaining one of them. Here $F$ denotes the finite-dimensional optimal set in \eqref{eq:face}. Thus taking the closure of a class of correlations and selecting the correlations that exactly attain the optimum need not commute.

The loss of faithfulness also prevents a uniform robustness bound. Robustness within a class means that for every $\varepsilon>0$ there is a $\delta>0$ such that every strategy in that class of score at least $\omega_{\mathrm q}(G)-\delta$ admits a local $\varepsilon$-dilation to the reference. The strict positivity of $\delta$ is part of this convention.

\begin{proposition}\label{prop:nonrobust}
  For every $0<t<1$ there is a pure full-Schmidt-rank PVM strategy on $\C^3\otimes\C^3$ with score
  \begin{equation}\label{eq:near-optimal}
    \omega_G(\mathsf S_t)=\frac{19-3t^2}{22}
  \end{equation}
  whose local-dilation error to $\Szero$ is at least one. Consequently, the restricted self-test of Proposition~\ref{prop:auxiliary} is not robust.
\end{proposition}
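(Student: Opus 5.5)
The strategy $\mathsf S_t$ will be a full-rank perturbation of the projective dilation of Section~\ref{sec:dilation}. Pad Bob's space to $\C^3$ and use the state
\[
  \ket{\psi_t}=\sqrt{1-\tfrac{t^2}{2}}\;\ket{\psi_*}+\tfrac{t}{\sqrt2}\,\ket{2}\otimes\ket{2}\in\C^3\otimes\C^3 .
\]
Its Schmidt coefficients are $\sqrt{(1-t^2/2)/2}$ (twice) and $t/\sqrt2$. These are all positive for $0<t<1$, so the state has full Schmidt rank; the exponent is chosen so that the final score comes out as in \eqref{eq:near-optimal}.

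Alice keeps the realization of Section~\ref{sec:dilation} unchanged. On base questions she uses observables $T_i\oplus0$ with abort projection $\Id-s$ and zero invalid effect. On question~4 she uses $\Pi_1,\Pi_2,\Pi_3$ with zero abort effect. Bob uses his qubit sign projections on the span of $\ket0,\ket1$, direct-summed with the abort projection $\ket2\bra2$ on every question. All measurements are PVMs, so $\mathsf S_t$ lies in the restricted class.

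To compute the score, observe that every Bob effect is block diagonal with respect to $\C^2\oplus\C\ket2$. Hence cross terms between the two components of $\ket{\psi_t}$ drop out of every probability, and the correlation is the mixture
\[
  (1-t^2/2)\,p_1+(t^2/2)\,q .
\]
Here $q$ is the correlation of the product component $\ket{22}$. In $q$, on base questions both players abort, so every base pair wins except $(1,1)$, which loses weight $3$. On question~4, Alice never aborts while Bob aborts, so the three extra pairs lose weight $1$ each. Thus $22\,\omega_G(q)=19-6$. Since $22\,\omega_G(p_1)=19$, the score is
\[
  22\,\omega_G(\mathsf S_t)=19-\tfrac{t^2}{2}\cdot6=19-3t^2 ,
\]
which is \eqref{eq:near-optimal}. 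I will double-check this weight bookkeeping against \eqref{eq:weights}.

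For the dilation error, repeat the argument of Proposition~\ref{prop:auxiliary}(iii). Alice's answer-$0$ effect on question~4 is zero, whereas in $\Szero$ it is $\Id_{\C}$. The left side of \eqref{eq:dilation-alice} is therefore $\norm{0-\ket{00}\otimes\ket\eta}=1$ for any isometries, purification and unit vector $\ket\eta$. So no local $\varepsilon$-dilation exists for $\varepsilon<1$.

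Non-robustness then follows directly. Given $\varepsilon<1$ and any $\delta>0$, choose $t$ with $3t^2/22<\delta$; then $\mathsf S_t$ is in the class, is $\delta$-close to optimal, and has error $\geq1$. The only delicate step is the score computation, specifically confirming that the cross terms vanish. This relies on Bob's block-diagonality and not on Alice's, since $\Pi_i$ mixes $\ket2$ with the qubit span.
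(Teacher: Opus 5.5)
Your construction is the paper's: Alice's three-dimensional PVMs from Section~\ref{sec:dilation}, Bob's qubit observables $-T_j\oplus0$ with abort projection $\Id-s$, and a $\ket{22}$ admixture for full rank. Cross terms are killed by Bob's block diagonality, and the dilation error of one comes from Alice's zero abort effect on question~4. However, the score bookkeeping you flagged is wrong. For the product component $q$, the losses are weight $3$ on $(1,1)$ and weight $1$ on each of the three extra pairs, out of total weight $22$. So $22\,\omega_G(q)=22-6=16$, not $19-6=13$. You subtracted all six lost units from the base weight $19$ and never credited the extra pairs' weight $3$. Equivalently, the abort component has base score $16$, exactly like the qubit component, and loses only the $3$ units of the extra questions.

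Consequently your state, with weight $t^2/2$ on $\ket{22}$, gives $22\,\omega_G(\mathsf S_t)=(1-\tfrac{t^2}{2})\cdot 19+\tfrac{t^2}{2}\cdot 16=19-\tfrac32 t^2$, which does not match \eqref{eq:near-optimal}. The fix is to put weight $t^2$ on the product component: $\ket{\psi_t}=\sqrt{1-t^2}\,\ket{\psi_*}+t\ket{22}=\sqrt{(1-t^2)/2}\,(\ket{00}+\ket{11})+t\ket{22}$. This is precisely the paper's state, i.e.\ your family at parameter $\sqrt2\,t$. Its Schmidt coefficients are positive for $0<t<1$, and the score becomes $(1-t^2)\cdot 19+t^2\cdot 16=19-3t^2$. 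The non-robustness conclusion was never in danger: either normalization gives scores tending to $19/22$ with dilation error one, and your choice of $t$ with $3t^2/22<\delta$ still works.
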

\begin{proof}
  Keep Alice's three-dimensional PVMs from Section~\ref{sec:dilation}, and give Bob the base observables $-T_j\oplus0$ with abort projection $\Id-s$. Use the state
  \begin{equation}\label{eq:state-family}
    \ket{\psi_t}=\sqrt{\frac{1-t^2}{2}}(\ket{00}+\ket{11}) +t\ket{22}.
  \end{equation}
  All three Schmidt coefficients are positive. The base measurement effects are block diagonal. Both the normalized qubit part and the $\ket{22}$ part have base weighted score $16$, so the base score is exactly $16$ for every $t$. On each extra question, the qubit part wins surely. On the $\ket{22}$ part Bob aborts and Alice never aborts, so that part loses. Cross terms vanish because Bob's measurement is block diagonal. The three extra questions thus contribute $3(1-t^2)$, proving \eqref{eq:near-optimal}.

  Alice's abort effect on question~4 is still zero. In \eqref{eq:dilation-alice}, its image must be compared with $(\Id\otimes\Id)\ket{\widetilde\psi}\otimes\ket\eta$, a unit vector for the reference $\Szero$. The error is exactly one, independently of the isometries and auxiliary vector.
\end{proof}

For the auxiliary strategies $\mathsf S_t$, the source of the failure can be seen directly from the reduced state
\[
  \rho_t=\operatorname{diag}\left(\frac{1-t^2}{2}, \frac{1-t^2}{2},t^2\right).
\]
For $r=\Id-s$, define the state-dependent norm $\norm a_{\rho_t}=\sqrt{\Tr(\rho_t a^*a)}$. Then
\begin{equation}\label{eq:state-norm}
  \norm r=1,\qquad \norm r_{\rho_t}=t\longrightarrow0.
\end{equation}
An invertible reduced state permits exact inference from a zero state-vector error to a zero operator, as in \eqref{eq:faithful}. To make that inference uniform for small errors requires a uniform spectral lower bound. Specifically, $\rho\succeq\mu\Id$ implies $\norm a\leq\mu^{-1/2}\norm a_\rho$, whereas the smallest eigenvalue of $\rho_t$ tends to zero. Thus the algebraic inference from a state equation to an operator equation is exact for each $t>0$ but has no uniform quantitative bound along this family.

\begin{corollary}\label{cor:h-nonrobust}
  For every $0<t<1$, there is a pure full-Schmidt-rank PVM strategy for $H$ on $\C^6\otimes\C^6$ with score
  \begin{equation}\label{eq:h-near-optimal}
    \omega_{\mathrm q}(H)-\frac{3t^2}{44}
  \end{equation}
  whose local-dilation error to $\Sref$ is at least one. Consequently, the restricted self-test in Theorem~\ref{thm:main} is not robust.
\end{corollary}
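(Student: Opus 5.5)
The strategy is to tensor the strategy $\mathsf S_t$ of Proposition~\ref{prop:nonrobust} with the standard CHSH qubit strategy. The $G$ questions act on the first factor and the CHSH questions on the second, which gives local dimension $3\cdot2=6$.

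Concretely, take the shared state $\ket{\psi_t}\otimes\ket{\Phi^+}\in(\C^3\otimes\C^2)_A\otimes(\C^3\otimes\C^2)_B$, after reordering the factors. Its Schmidt coefficients are the products of the positive Schmidt coefficients of the two factors, so it has full Schmidt rank six. On each $g$ question, Alice and Bob use the PVM effects of $\mathsf S_t$ tensored with $\Id_2$. On each $c$ question, they use $\Id_3$ tensored with the CHSH sign projections $(\Id\pm R)/2$; the invalid effects are zero. All effects are projections.

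For the score, note that the two components of $H$ use disjoint question pairs. On $g$ pairs the correlation is that of $\mathsf S_t$, because the $\C^2$ factor only contributes $\langle\Phi^+|\Phi^+\rangle=1$. On $c$ pairs it is the CHSH correlation, because $\ket{\psi_t}$ is normalized. The score is therefore
\[
\tfrac12\cdot\tfrac{19-3t^2}{22}+\tfrac12\gamma=\omega_{\mathrm q}(H)-\tfrac{3t^2}{44},
\]
using \eqref{eq:near-optimal} and \eqref{eq:h-value}.

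For the dilation error, look at Alice's answer-$0$ effect on $g_4$. It is $0\otimes\Id_2=0$, so the left side of \eqref{eq:dilation-alice} is the norm of $(\Id\otimes\Id)\ket{\Phi^+}\otimes\ket\eta$. The reference $\Sref$ uses the identity on $g$ questions, so this vector is a unit vector and the norm equals $1$ for every choice of isometries and $\eta$. Hence no local $\varepsilon$-dilation exists for any $\varepsilon<1$.

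Non-robustness follows directly. Fix $\varepsilon<1$ and take any $\delta>0$. Choosing $t$ with $3t^2/44<\delta$ gives a restricted strategy with score above $\omega_{\mathrm q}(H)-\delta$ and dilation error at least one. So no positive $\delta$ works.

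The only point needing care is bookkeeping rather than difficulty. One must check that the tensor-ordering convention of Definition~\ref{def:dilation} is respected, and that full Schmidt rank survives the tensor product. Both are immediate.
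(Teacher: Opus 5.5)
Your proposal is correct and follows the paper's own proof essentially step for step: you take $\ket{\psi_t}\otimes\ket{\Phi^+}$ on $\C^3\otimes\C^2$ per side, average the component scores $(19-3t^2)/22$ and $\gamma$, and note that the zero answer-$0$ effect on $g_4$, set against the reference identity, forces dilation error one. Your explicit choice of $t$ with $3t^2/44<\delta$ for non-robustness just spells out the step the paper leaves implicit.
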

\begin{proof}
  On each side use $\C^3\otimes\C^2$, with shared state
  \[
    \ket{\Psi_t} =\ket{\psi_t}_{A_0B_0}\otimes\ket{\Phi^+}_{A_1B_1}.
  \]
  For the $G$ questions use the PVMs of Proposition~\ref{prop:nonrobust} on the first factor; for the CHSH questions use the standard PVMs on the second factor. All six Schmidt coefficients are positive. The component scores are $(19-3t^2)/22$ and $\gamma$, so their average is \eqref{eq:h-near-optimal}. Alice's answer-$0$ effect on $g_4$ is still zero, while the reference effect is $\Id_2$. Its measurement dilation error is therefore one for every choice of isometries and auxiliary unit vector.
\end{proof}

The construction of $H$ separates exact certification of the complete strategy while preserving Bell-state extraction at every optimum. A further question is whether the joint assumptions can change state certification itself: can a game self-test a full-rank projective entangled reference within the restricted class while admitting an unrestricted optimum from which the reference state cannot be extracted?

\end{document}